\def\arg {\mathop{\rm arg}\nolimits}
\newtheorem{pro}{Proposition}
\newtheorem{cor}{Corollary}
\newtheorem{lem}{Lemma}
\newtheorem{thm}{Theorem}
\theoremstyle{remark}
\newtheorem{rem}[thm]{Remark}
\numberwithin{equation}{section}
\begin{document}

\title[ PIII$'$  $\tau$-function sequence and random matrix theory]{Applications in random matrix theory of a PIII$'$ $\tau$-function sequence from Okamoto's Hamiltonian formulation}

\author{Dan Dai}
\address{Department of Mathematics, City University of Hong Kong, Tat Chee Avenue, Kowloon, Hong Kong}
\email{dandai@cityu.edu.hk}
\author{Peter J. Forrester}
\address{School of Mathematics and Statistics,
	ARC Centre of Excellence for Mathematical
	and Statistical Frontiers,
	University of Melbourne, Victoria 3010, Australia}
\email{pjforr@unimelb.edu.au}
\author{Shuai-Xia Xu}
\address{Institut Franco-Chinois de l'Energie Nucl\'eaire, Sun Yat-sen University, Guangzhou 510275, China}
\email{xushx3@mail.sysu.edu.cn}

\begin{abstract}
	We consider the singular linear statistic of the Laguerre unitary ensemble consisting of the sum of the reciprocal of the eigenvalues. It is observed that the exponential generating function for this statistic can be written as a
	Toeplitz determinant with entries given in terms of particular
	$K$ Bessel functions. Earlier studies have identified the same determinant, but with the $K$ Bessel functions replaced by $I$ Bessel functions, as relating to the hard edge scaling limit of
	a generalized gap probability for the Laguerre unitary ensemble, in the case of non-negative integer Laguerre parameter. We show that the Toeplitz determinant formed from an arbitrary linear combination of these two Bessel functions occurs as a $\tau$-function sequence in Okamoto's Hamiltonian formulation of Painlev\'e
	III$'$, and consequently the logarithmic derivative of both
	Toeplitz determinants satisfies the same $\sigma$-form Painlev\'e III$'$ differential equation, giving an explanation of a fact which can be observed from earlier results. In addition, some insights into the relationship between this
	characterization of the generating function, and its characterization in the $n \to \infty$ limit, both with the Laguerre parameter $\alpha$ fixed, and with $\alpha = n$ (this latter circumstance being relevant to an application to the distribution of the Wigner time delay statistic), are given.
	\end{abstract}

\maketitle

%\noindent \hrule width 6.27in\vskip .3cm

\section{Introduction}
\subsection{$\sigma$-form of PIII$'$ and an LUE matrix integral}

In random matrix theory, the Laguerre unitary ensemble (LUE) is a probability distribution on the space
of $n \times n$ positive definite matrices, specified by the joint eigenvalue probability density function
\begin{equation}\label{def: LUE-PDF}
p(\lambda_1,\dots,\lambda_n)=\frac{1}{C_n} \prod_{l=1}^n\lambda_l^{\alpha}e^{-\lambda_l}\prod_{1\leqslant j<k\leqslant n}(\lambda_j-\lambda_k)^2, \qquad  \alpha>-1, \  \lambda_1,\dots,\lambda_n \in [0, \infty),
\end{equation}
where
\begin{equation}\label{LUE-Zn}
  C_n = \prod_{j=1}^n j! \, \Gamma(j+\alpha)
\end{equation}
is the normalisation;
see e.g.~ the texts Mehta \cite[Eq.~(17.6.5)]{m} or Forrester \cite[Eq.~(3.16)]{F}.
	This probability distribution is realized by the eigenvalues of matrices of the form $M=X^{*} X$, where $X$ is an $m\times n$ ($m\geq n$) rectangular matrix of i.i.d.~entries with standard complex Gaussian distribution $\mathbf{N}[0,1] + i \mathbf{N}[0,1]$. In this setting, the parameter $\alpha$ in \eqref{def: LUE-PDF} equals $m-n$, and is thus a non-negative integer. The terminology ``Laguerre"
	relates to the appearance of the Laguerre weight function $\lambda^\alpha e^{-\lambda}$, $\lambda > 0$, from orthogonal polynomial
	theory, while ``unitary" refers to the fact that matrices $\{M\}$ as defined above have distribution unchanged by the unitary conjugation $M$
	maps to $U^* M U$.
	
One	of the many applications of the Laguerre unitary ensemble is in the theory of quantum scattering; see e.g.~the review \cite{Be97}.
There, a basic setting is the coupled lead-cavity system, and a class of observables are the proper delay times
$\{ \tau_j\}_{j=1}^n$, defined as the eigenvalues of the Wigner--Smith matrix
\begin{equation}\label{QS}
Q := - i \hbar S^{-1} {\partial S \over \partial E}.
\end{equation}
In (\ref{QS}), $S$ is the scattering matrix and $E$ is the energy of the waves as they enter the cavity.
 The proper delay times
are used to compute the Wigner delay time $\tau_{\rm W} := {1 \over n} \sum_{j=1}^n \tau_j$. It gives a measure of
the average time a wave packet spends scattering in the cavity; see \cite{Te15}. For a particular random matrix model
of the lead-cavity system, and assuming the presence of a time reversal symmetry breaking magnetic field, it
was shown by Brouwer et al.~\cite{BFB} that the reciprocal of the proper delay times multiplied by $n$, i.e.~$\{n/\tau_j\}_{j=1}^n$, have the
joint distribution of (\ref{def: LUE-PDF}) with $\alpha = n$.
	
As made explicitly by Mezzadri and Simm \cite{MS}, the problem of quantifying the statistics of the proper
delay times is therefore a particular case of the problem of quantifying the linear statistic
\begin{equation}\label{L}
L=\sum_{k=1}^n \frac{1}{\lambda_k}
\end{equation}
in the Laguerre unitary ensemble. For this purpose, it is convenient to introduce
the moment generating function for $L$, defined by
\begin{equation}\label{def: generating function}
M_n(t)=\frac{1}{C_n} \int_0^{\infty} \cdots \int_0^{\infty}\prod_{1\leqslant j<k\leqslant n}(\lambda_j-\lambda_k)^2 \prod_{l=1}^n\lambda_l^{\alpha}e^{-\lambda_l-\frac{t}{\lambda_l}} \, d \lambda_l,
\end{equation}
which is also the Laplace transform of the probability density of $L$.
Equivalently, as an average over the space of positive definite matrices belonging to the LUE, we have
\begin{equation}\label{MT}
M_n(t) = \Big \langle e^{- t {\rm Tr} \, M^{-1}} \Big \rangle_{\rm LUE}.
\end{equation}
%  The derivative of the  generating function gives the moments of $L$
% \begin{equation}\label{def:moments}
%(-1)^kM_n^{(k)}(0)=\frac{1}{Z_n} \int_0^{\infty} \cdots \int_0^{\infty}L^k\prod_{l=1}^n\lambda_l^{\alpha}e^{-\lambda_l}\prod_{1\leqslant j<k\leqslant n}(\lambda_j-\lambda_k)^2\prod_{l=1}^n  \, d \lambda_l,
%\end{equation}
%for $k<\alpha+1$.
Mezzadri and Simm \cite{MS} sought a characterization of (\ref{def: generating function}) suited to the
computation of the large $n$ form of the scaled cumulants corresponding to (\ref{L}).

There is an alternative formulation of (\ref{def: generating function}).
Motivated by the weighting of each Lebesgue measure $d \lambda_l$ therein, define
\begin{equation}\label{w0}
w(x;t)=x^{\alpha}e^{-x-\frac{t}{x}}, \qquad x>0,\quad  t>0,
\end{equation}
which is the classical Laguerre weight deformed by a simple pole at the origin in the exponent. Note that since $t>0$, the factor $e^{-\frac{t}{x}}$ is exponentially small when $x \to 0^+$. As a consequence, all the moments
\begin{equation}\label{def: moment}
 \mu_j(t):=\int_0^{\infty}x^jw(x;t) \, dx
\end{equation}
exist for any $\alpha \in \mathbb{R}$. The Hankel determinant associated with \eqref{w0} is
\begin{equation}\label{def: Hankel det}
D_n(t)=\det \biggl[ \mu_{j+k}(t)\biggr ]_{j,k=0}^{n-1}.
\end{equation}
It is well-known that, for a general weight $w(x;t)$,
the Hankel determinant admits the multiple integral representation (see e.g.~\cite[Eq.~(5.75)]{F})
\begin{equation}\label{eq:Hankel-integral}
     D_n(t)=\frac{1}{n!} \int_I \cdots \int_I  \prod_{1\leqslant j<k \leqslant n}(\lambda_j-\lambda_k)^2 \prod_{l=1}^n w(\lambda_l;t) \, d\lambda_l,
  \end{equation}
  where $I$ denotes the support of the weight.
Consequently the moment generating function \eqref{def: generating function} can be expressed in terms of the Hankel determinant 
\begin{equation}\label{MD}
M_n(t)=\frac{D_n(t)}{D_n(0)}.
\end{equation}
Moreover, recalling that the explicit form of the normalization constant $C_n$ for the Laguerre unitary ensemble from \eqref{LUE-Zn}, we have
\begin{equation}
D_n(0)=\frac{C_n}{n!} = \frac{G(n+1)G(n+\alpha+1)}{G(\alpha+1)},
\end{equation}
where $G(z)$ is the Barnes' $G$-function satisfying the functional relation
\begin{equation}
  G(z+1)=\Gamma(z)G(z), \quad \textrm{with } G(1) = 1;
\end{equation}
see \cite[Eq.(5.17.1)]{dlmf}.

A viewpoint in integrable systems theory shows that it is advantageous to
extend the multiple integral (\ref{eq:Hankel-integral}) for general weights $w(\lambda;t)$, depending on at least one
parameter $t$ and vanishing at the
endpoints of $I$, to the family of integrals
\begin{equation}
Z_n[w](\{c_j\}) :=  \int_I \cdots \int_I  \prod_{1\leqslant j<k \leqslant n}(\lambda_j-\lambda_k)^2 \prod_{l=1}^n w(\lambda_l;t) e^{\sum_{j=1}^\infty c_j \lambda_l^j} \, d\lambda_l.
\end{equation}
It is a fundamental property, explained in e.g.~the review \cite{vM} or the text \cite[\S 5.7]{F}, that this family of
integrals, as a function of $\{c_j\}$ satisfy the KP hierarchy of equations, with the simplest nontrivial example being
\begin{equation}\label{4.1}
\bigg ( \Big ( {\partial \over \partial c_1} \Big )^4 + 3 \Big ( {\partial \over \partial c_2} \Big )^2 - 4
 {\partial^2 \over \partial c_1 \partial c_2} \bigg ) \log Z_n +
 6  \Big ( {\partial^2 \over \partial c_1^2} \log Z_n \Big )^2 = 0.
\end{equation}
It is furthermore the case that $Z_n$ satisfies additional differential equations --- referred to as Virasoro constraints ---
which in contrast to the KP hierarchy of equations depend on the explicit form of $w(\lambda;t)$; see again \cite{vM} or \cite[\S 9.10]{F}.
It has been known since the work of Adler and van Moerbeke \cite{AvM} that by combining (\ref{4.1}) and the Virasoro constraints, for
some special weights $w(\lambda;t)$ it is possible to determine a differential equation for $Z_n[w](\{c_j=0\})$ as a function of the parameter $t$.

By following this strategy, Mezzadri and Simm \cite{MS} deduced a characterization of  (\ref{def: generating function}) as the solution of
a particular second order nonlinear equation. Earlier, for the same multiple integral (\ref{def: generating function}), but with a different
interpretation relating to bosonic replica field theories,  Osipov and Kanzieper \cite{OK1} had applied precisely this
strategy and written down the differential equation but without giving the details. They remarked that the obtained equation
relates to Painlev\'e III. In the time between the two works \cite{OK1} and \cite{MS}, Chen and Its \cite{ci} studied
 (\ref{def: generating function}), motivated from its interpretation in random matrix theory as the generating function of
 the linear statistic (\ref{L}), and also by it providing an opportunity for expounding on the ladder operator method from
 the theory of orthogonal polynomials \cite{CIs}, as well as the use of a Riemann-Hilbert analysis for the
 same purpose \cite{KH}. Moreover, they identified the nonlinear equation as a particular example
 of the Jimbo-Miwa-Okamoto $\sigma$-form of Painlev\'e III$'$ (strictly speaking the latter requires some
 interpretation as it is the isomonodromy problem of Painlev\'e III itself, not its variant Painlev\'e III$'$
for which the relationship is established).

 \begin{thm}\label{T1} (Osipov and Kanzieper \cite{OK1}, Chen and Its \cite{ci}, Mezzadri and Simm \cite{MS}.) Specify $M_n(t)$ as the multiple integral (\ref{def: generating function}). We have that the quantity
 	$$
 	y_n(t) = t {d \over dt} \log M_n(t)
 	$$
 	satisfies the second order nonlinear differential equation
 	 \begin{equation}\label{H-equation}
 	(ty_n'')^2=(n-(2n+\alpha)y_n')^2-4(n(n+\alpha)+ty_n'-y_n)y_n'(y_n'-1).
 	\end{equation}
 	\end{thm}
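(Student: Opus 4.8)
The plan is to use the ladder operator method from the theory of orthogonal polynomials, which is well suited to the weight \eqref{w0} because $w(x;t)$ vanishes exponentially at both endpoints $x=0^+$ and $x=+\infty$, so that the boundary terms in all the integrations by parts below vanish for arbitrary real $\alpha$. Let $\{p_k(x)\}_{k\geq 0}$ be the monic polynomials orthogonal with respect to $w(\,\cdot\,;t)$ on $(0,\infty)$, with squared norms $h_k$ and three-term recurrence $x p_k = p_{k+1} + \alpha_k p_k + \beta_k p_{k-1}$. Since $D_n(t)=\prod_{j=0}^{n-1}h_j$ and $D_n(0)$ does not depend on $t$, \eqref{MD} gives $y_n(t)=t\tfrac{d}{dt}\log D_n=t\sum_{j=0}^{n-1}\tfrac{d}{dt}\log h_j$. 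Differentiating $h_k=\int_0^\infty p_k^2\, w\,dx$ and using $\partial_t w=-w/x$ together with orthogonality (the monic constraint makes $\partial_t p_k$ of degree $\le k-1$) yields the clean identity $\tfrac{d}{dt}\log h_k=-\tfrac{1}{h_k}\int_0^\infty p_k^2\, w\, x^{-1}\,dx$.

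First I would write down the ladder relations $p_k'(x)=\beta_k A_k(x)p_{k-1}(x)-B_k(x)p_k(x)$, where, with $v(x)=-\log w(x;t)=x+t/x-\alpha\log x$,
\[
A_k(x)=\frac{1}{h_k}\int_0^\infty \frac{v'(x)-v'(y)}{x-y}\,p_k(y)^2\, w(y)\,dy,
\]
\[
B_k(x)=\frac{1}{h_{k-1}}\int_0^\infty \frac{v'(x)-v'(y)}{x-y}\,p_k(y)p_{k-1}(y)\, w(y)\,dy.
\]
Because $\frac{v'(x)-v'(y)}{x-y}=\frac{\alpha}{xy}+t\,\frac{x+y}{x^2y^2}$, both $A_k$ and $B_k$ are rational in $x$ with only a pole at the origin; writing $A_k(x)=a_k/x^2+b_k/x$ and $B_k(x)=c_k/x^2+d_k/x$ introduces a finite set of auxiliary functions of $t$ (for instance $a_k=\tfrac{t}{h_k}\int p_k^2\, w\, y^{-1}\,dy$). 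The crucial observation is that $y_n$ itself is encoded in these data: by the identity of the previous paragraph, $y_n=-t\sum_{j=0}^{n-1}\tfrac{1}{h_j}\int p_j^2\, w\, x^{-1}\,dx$, which is exactly $-1$ times the coefficient of $x^{-2}$ in the sum $\sum_{j=0}^{n-1}A_j(x)$.

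The key step is then to invoke the compatibility conditions for the ladder operators, in particular
\[
B_n(x)^2+v'(x)B_n(x)+\sum_{j=0}^{n-1}A_j(x)=\beta_n\,A_n(x)A_{n-1}(x),
\]
together with $B_{n+1}(x)+B_n(x)=(x-\alpha_n)A_n(x)-v'(x)$. Matching coefficients of the powers $x^{-1},x^{-2},\dots$ on both sides turns these into a closed system of algebraic relations among $\alpha_n,\beta_n$ and the auxiliary coefficients $a_n,b_n,c_n,d_n$ and their shifts $n\mapsto n\pm1$; the displayed relation in particular expresses the troublesome sum $\sum_{j<n}A_j$ directly through level-$n$ data, so that $y_n$ is expressed \emph{algebraically} in terms of $B_n,A_n,A_{n-1}$ and $\beta_n$. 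Supplementing this with the Toda-type time evolution obtained by differentiating the recurrence coefficients and $A_k,B_k$ in $t$ (again via $\partial_t w=-w/x$) yields first-order flows for $\tfrac{d}{dt}\alpha_n$, $\tfrac{d}{dt}\beta_n$, and the auxiliary quantities.

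The final step, and the main obstacle, is the elimination: one differentiates the algebraic expression for $y_n$ once and twice in $t$, substitutes the time-evolution flows to express $y_n'$ and $y_n''$ through the auxiliary quantities, and then systematically removes every auxiliary quantity using the algebraic relations from the compatibility conditions until a single identity in $y_n,y_n',y_n''$ and $t$ remains. The bookkeeping here is heavy, and obtaining the precise coefficients $n$, $2n+\alpha$ and $n(n+\alpha)$ appearing in \eqref{H-equation} requires tracking the $n$-dependence through all the intermediate relations without error; I expect this elimination to be where essentially all the work lies. As an alternative route reaching the same equation, one could instead follow the Adler--van Moerbeke scheme sketched before Theorem~\ref{T1}, combining the lowest KP equation \eqref{4.1} with the Virasoro constraints specific to the weight \eqref{w0}, but the nonlinear elimination at the end is of comparable difficulty.
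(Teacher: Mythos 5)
Your route is genuinely different from the paper's. The paper's own derivation of Theorem \ref{T1} (Section \ref{S2}, culminating in Theorem \ref{main-thm} and Corollary \ref{Cf}) never touches ladder operators: it writes $D_n(t)$ as a Toeplitz determinant of $K$-Bessel functions \eqref{Hankel- D-Bes}, identifies that determinant as a member of Okamoto's $\tau$-function sequence for PIII$'$ generated from the seed $\mathcal{L}_v=aI_v+be^{v\pi i}K_v$ with $a=0$, $v=n+\alpha$, and then reads off \eqref{H-equation} from the general fact that the logarithmic derivative of such a $\tau$-function satisfies the $\sigma$-form \eqref{eq: sigma-equation-1}; no elimination of auxiliary variables is ever needed, and the same argument explains structurally why the $I$-Bessel determinant in \eqref{eq: hard edge-limit} obeys the identical ODE. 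What you propose is instead the ladder-operator method of Chen and Its \cite{ci}, one of the three cited origins of the theorem, and your setup checks out: the boundary terms do vanish for the weight \eqref{w0}, the computation $\frac{v'(x)-v'(y)}{x-y}=\frac{\alpha}{xy}+t\,\frac{x+y}{x^2y^2}$ is correct, the identification of $y_n$ with minus the $x^{-2}$-coefficient of $\sum_{j=0}^{n-1}A_j$ is correct, and the two compatibility conditions you quote (the sum rule involving $\beta_n A_n A_{n-1}$ and the relation $B_{n+1}+B_n=(x-\alpha_n)A_n-v'$) are the right ones. The trade-off between the approaches: yours is elementary and self-contained, working directly with the orthogonal-polynomial structure and producing extra information (recurrence coefficients, a Riccati system for the auxiliary quantities) along the way, whereas the paper's approach obtains \eqref{H-equation} essentially for free from Okamoto's general theory once the Bessel identification is made, and simultaneously accounts for both the $K$- and $I$-Bessel determinants. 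The one caveat is that your write-up stops precisely where the theorem is actually produced: the elimination of $\alpha_n,\beta_n$ and the auxiliary coefficients in favour of $y_n,y_n',y_n''$ — which occupies the bulk of \cite{ci} and passes through coupled Riccati equations before the second-order ODE emerges — is described but not executed, so the specific coefficients $n$, $2n+\alpha$ and $n(n+\alpha)$ in \eqref{H-equation} are never actually derived; as written this is a sound strategy, known to succeed, rather than a complete proof.
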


 \begin{rem}\label{R2}
 	We know from \cite{OK} (see subsection \ref{S2.1} below for a brief summary) that for the Hamiltonian formulation of Painlev\'e III$'$ ---
 	which depends on parameters $v_1, v_2$ --- the modified Hamiltonian
 	\begin{equation}\label{hH}
 	h(t) = tH(t) + {1 \over 4} v_1^2
 	-{1 \over 2} t
 	\end{equation}
 	 satisfies the nonlinear differential equation
 	\begin{equation}\label{eq:h}
 	(t h'')^2+(4(h')^2-1)(th'-h)+v_1v_2h'-\frac{1}{4}(v_1^2+v_2^2)=0.
 	\end{equation}
 	Comparing with (\ref{H-equation}) we see that the latter corresponds to setting
 	\begin{equation}\label{hv}
 	(v_1, v_2) = (2n+\alpha, - \alpha)
 	\end{equation}
 	and $y_n(t) = h(t) + {t \over 2} - {\alpha^2 \over 4}$.
 	
 	In (\ref{eq:h}), setting
 	\begin{equation}\label{hH1}
 	h\Big ({t \over 4} \Big )= - \sigma_{III'}(t)  + {t \over 8} + {v_1 v_2 \over 4}
 	\end{equation}
 	shows that
 	$\sigma_{III'}$ satisfies
 	\begin{equation}\label{eq: sigma-equation-1}
 	(t\sigma'')^2-v_1v_2(\sigma')^2+\sigma'(4\sigma'-1)(\sigma-t\sigma')-\frac{(v_1-v_2)^2}{4^3}=0,
 	\end{equation}
 	which following \cite{OK} we take as the standard form of the $\sigma$-Painlev\'e III$'$ equation.
 	
 \end{rem}

 \begin{rem}\label{Re3}
	Integrable systems theory is known to be related to the characterisation of other singular statistics
	in random matrix theory. The case of the weight
	\begin{equation}\label{w1}
	w(x;z,s) = \exp \Big ( - {z^2 \over 2 x^2} + {s \over x} - {x^2 \over 2} \Big ), \qquad z \in \mathbb R \backslash \{0\}, \:
	0 \le s < \infty, \, x \in \mathbb R
	\end{equation}
	was considered by Mezzadri and Mo \cite{MM}, Brightmore et al.~\cite{BMM} and (in the case $s=0$) by Min et~al.~\cite{MLC}.
	Chen and Dai \cite{CD}, and later Chen et al.~\cite{CCF} studied the Hankel determinant
	\eqref{def: Hankel det} for weights
	\begin{equation}\label{w2}
	w(x;t,\alpha,\beta) = x^\alpha (1 - x)^\beta e^{-t/x}, \qquad x \in [0,1], \: \alpha, \beta> 0, \: t \ge 0.
	\end{equation}
	The underlying random matrix ensembles are the Gaussian unitary ensemble for (\ref{w1}) and the Jacobi unitary
	ensemble for (\ref{w2}). The weight
		$$
		w(x;\{t_k\}_{k=1}^{2m}) = \exp \Big ( - n \Big (2 x^2 + \sum_{k=1}^{2m}{t_k \over (x - \lambda)^k}
		\Big ) \Big ), \quad t_{2m} > 0,
		$$
	corresponding to the generating function for a family of singular linear statistics in the Gaussian unitary ensemble
	is 	considered in the recent work by Dai et al.~\cite{DXZ}, while a study of 	\eqref{def: Hankel det} in the case of
	the weights
	$$
	w_k(x;t)=x^{\alpha}e^{-x- ( \frac{t}{x}  )^k}, \qquad x>0,\quad  t>0, \, k \in \mathbb Z^+
	$$
	generalising (\ref{w0}) is given in a work of Atkin et al.~\cite{acm}.
	
	The works \cite{MM}, \cite{BMM},
	\cite{CCF}, \cite{DXZ}, \cite{acm} cited above are actually concerned with an integrable systems characterisation of
	the Hankel determinant in the (scaled) large $n$ limit. For such asymptotic studies of  \eqref{def: Hankel det} with the weight
	(\ref{w0}), see the papers by  Mezzadri and Simm \cite{MS}, Xu et al. \cite{XDZ2015,XDZ}, and also Section \ref{S3} below.
	\end{rem}

\medskip
The analogue of Theorem \ref{T1} in relation to the weight (\ref{w2}) is required for our subsequent working.

\begin{thm}\label{T2} (Chen and Dai \cite{CD}.) Let $D_n(t)$ be given by \eqref{eq:Hankel-integral} with weight
(\ref{w2}), and set $H_n(t) = t {d \over dt} \log D_n(t)$. This latter quantity satisfies the second order nonlinear
differential equation
	\begin{equation}\label{J1}
	(t H_n'')^2 = \Big ( n (n + \alpha + \beta) - H_n + (\alpha + t) H_n'\Big )^2 + 4 H_n' (t H_n' - H_n)(\beta - H_n').
	\end{equation}	
\end{thm}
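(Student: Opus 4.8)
The plan is to characterise $H_n(t)$ through the ladder-operator method for the monic orthogonal polynomials attached to the weight (\ref{w2}), the same technique used by Chen and Its \cite{ci} (see also \cite{CIs}) in the analysis leading to Theorem \ref{T1}. Let $\{P_k(x)\}_{k\ge 0}$ be monic and orthogonal on $[0,1]$ with respect to $w(x;t,\alpha,\beta)$, with squared norms $h_k=\int_0^1 P_k^2(x)w(x)\,dx$ and three-term recurrence
\begin{equation*}
xP_k(x)=P_{k+1}(x)+\alpha_k P_k(x)+\beta_k P_{k-1}(x),\qquad \beta_k=\frac{h_k}{h_{k-1}}.
\end{equation*}
Since $D_n=\prod_{k=0}^{n-1}h_k$ and $\partial_t\log w=-1/x$, orthogonality gives $\frac{d}{dt}\log h_k=-\frac{1}{h_k}\int_0^1 x^{-1}P_k^2(x)w(x)\,dx$, whence
\begin{equation*}
H_n(t)=t\frac{d}{dt}\log D_n(t)=-t\sum_{k=0}^{n-1}\frac{1}{h_k}\int_0^1\frac{P_k^2(x)w(x)}{x}\,dx .
\end{equation*}
The point of writing $H_n$ this way is that the same sum of integrals reappears as the coefficient of $1/x^2$ in the summed supplementary condition below, so that $H_n$ enters the eventual algebraic system directly rather than only through a further identity.

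Next I would compute the ladder operators. With $v=-\log w$, so that $v'(x)=-\alpha/x+\beta/(1-x)-t/x^2$, the divided difference is
\begin{equation*}
\frac{v'(x)-v'(y)}{x-y}=\frac{\alpha}{xy}+\frac{\beta}{(1-x)(1-y)}+\frac{t(x+y)}{x^2y^2}.
\end{equation*}
Substituting this into the representations
\begin{equation*}
A_n(x)=\frac{1}{h_n}\int_0^1\frac{v'(x)-v'(y)}{x-y}P_n^2(y)w(y)\,dy,\qquad
B_n(x)=\frac{1}{h_{n-1}}\int_0^1\frac{v'(x)-v'(y)}{x-y}P_n(y)P_{n-1}(y)w(y)\,dy
\end{equation*}
shows that $A_n$ and $B_n$ are rational in $x$ with a second-order pole at $x=0$ and a simple pole at $x=1$, and no polynomial part. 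Reading off the residues introduces a finite set of auxiliary quantities --- the integrals $\int_0^1 y^{-1}P_n^2 w$, $\int_0^1 y^{-2}P_n^2 w$, $\int_0^1(1-y)^{-1}P_n^2 w$ and their $P_nP_{n-1}$ analogues --- which, using orthogonality and the recurrence, can be reduced to a closed collection together with $\alpha_n$ and $\beta_n$.

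The third step is to impose the compatibility conditions \cite{CIs}
\begin{align*}
(S_1):\quad & B_{n+1}(x)+B_n(x)=(x-\alpha_n)A_n(x)-v'(x),\\
(S_2'):\quad & B_n^2(x)+v'(x)B_n(x)+\sum_{j=0}^{n-1}A_j(x)=\beta_n A_n(x)A_{n-1}(x),
\end{align*}
and to match the coefficients of the powers $1/x^2$, $1/x$, $1/(1-x)$ (and their products) on both sides. This yields a coupled algebraic system relating $\alpha_n$, $\beta_n$ and the residue variables; crucially, the $1/x^2$ coefficient of the sum $\sum_{j<n}A_j$ in $(S_2')$ equals $-H_n$ by the identity of the first step, so several of these relations already involve $H_n$. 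I would complement them with the Toda-type evolution equations for $h_n$, $\alpha_n$, $\beta_n$, again obtained from $\partial_t\log w=-1/x$, to capture the $t$-dependence.

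Finally, writing $H_n'$ and $H_n''$ in terms of the auxiliary quantities through the evolution equations, I would eliminate $\alpha_n$, $\beta_n$ and the residue variables from the combined algebraic-differential system to reach the single second-order equation (\ref{J1}). I expect this elimination to be the principal obstacle: it is not conceptually deep but organisationally delicate, the difficulty being to select a minimal set of auxiliary quantities for which the residue-matching and Toda relations form a determined system, and then to carry out the long resultant-type reduction without error. A secondary subtlety, absent for the hard-edge factor $(1-x)^\beta$, is the double pole of $v'$ at $x=0$ coming from $e^{-t/x}$: it forces an extra residue variable and extra relations from the $1/x^4$, $1/x^3$, $1/x^2$ coefficients of $(S_2')$, and keeping this larger system consistent is where the computation is most likely to go astray.
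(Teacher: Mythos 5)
Your plan is correct, but its relation to the literature should be made explicit: it is, in outline and in most details, the proof of Chen and Dai \cite{CD}, which is precisely the source this paper cites for Theorem \ref{T2} --- the paper itself never proves \eqref{J1} by ladder operators, or indeed by any self-contained argument. The checkable ingredients you give are all accurate: $v'(x)=-\alpha/x+\beta/(1-x)-t/x^2$, the divided-difference decomposition, the pole structure of $A_n$ and $B_n$ (double pole at $0$, simple pole at $1$, no polynomial part), the identity making the $1/x^2$ coefficient of $\sum_{j=0}^{n-1}A_j$ equal to $-H_n$, and the need for Toda-type $t$-evolution relations; the long elimination you defer is known to close up, as it constitutes the bulk of \cite{CD}. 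What the paper does instead is observe, in the paragraph following Theorem \ref{T3}, that \eqref{J1} is equivalent to the Forrester--Witte $\sigma$-PV characterization of the generalized gap probability \eqref{def:LUE-ave}: the substitutions $\lambda_l\mapsto s\lambda_l$ and then $\lambda_l\mapsto 1/\lambda_l$ convert $E_n(s;\alpha,\mu)$, up to normalization and the factor $s^{n(\alpha+\mu+n)}$, into the Hankel determinant \eqref{eq:Hankel-integral} for the weight \eqref{w2} with $\beta=\mu$ and $\alpha\mapsto-\alpha-\mu-2n$, so that Theorem \ref{T2} follows from Theorem \ref{T3} (proved in \cite{FW-2002} via Okamoto's Hamiltonian $\tau$-function theory of PV), using the symmetry of \eqref{PV} in the parameters $\{\nu_j\}$. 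The trade-off between the two routes: yours is self-contained within orthogonal-polynomial theory and produces auxiliary quantities (recurrence coefficients, residues) that are useful for asymptotic analysis, but the price is the delicate resultant-type reduction, whose hardest part you correctly locate in the double pole of $v'$ at the origin; the paper's route is essentially a change of variables once Theorem \ref{T3} is granted, and it embeds \eqref{J1} in the same Okamoto Hamiltonian framework that the paper then develops for PIII$'$, with the PV parameters \eqref{PVa} and \eqref{PVp} matched structurally rather than extracted from an elimination.
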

		
		\begin{rem}
			According to the Hamiltonian formulation of the Painlev\'e V equation due to Okamoto \cite{OK} (see
			\cite[\S 8.2]{F} for a text book treatment), what now is referred to as the Jimbo-Miwa-Okamoto $\sigma$-form
			of  Painlev\'e V is the differential equation
			\begin{equation}\label{PV}
			(t \sigma'')^2 - \Big ( \sigma - t \sigma' + 2 ( \sigma')^2 + (\nu_0 + \nu_1 + \nu_2 + \nu_3) \sigma' \Big )^2 +
			4 \prod_{l=0}^3 ( \nu_l + \sigma') = 0,
			\end{equation}
			where $\{\nu_j\}_{j=0}^3$ are parameters.
			As noted in \cite{CD}, after replacing $H_n$ in Theorem \ref{T2} by $\tilde{H}_n := H_n - n(n+\alpha + \beta)$
			we see by comparing (\ref{J1}) and (\ref{PV}) that $\tilde{H}_n$ satisfies
			the Jimbo-Miwa-Okamoto $\sigma$-form of PV with parameters
			\begin{equation}\label{PVa}
			\nu_0 = 0, \quad \nu_1 = - (n + \alpha + \beta), \quad \nu_2 = n, \quad \nu_3 = - \beta.
			\end{equation}
		\end{rem}
			
\subsection{$\sigma$-form of PIII$'$ and PV and gap probabilities}
Theorems \ref{T1} and \ref{T2} restate known characterisations of the Hankel determinant
\eqref{def: Hankel det}, considered for interpretation within random matrix theory in its equivalent multiple integral form
\eqref{eq:Hankel-integral}, for the weights (\ref{w0})	and (\ref{w2}) as solutions of particular $\sigma$-form
 Painlev\'e III$'$ and  Painlev\'e V equation. Moreover, it has been commented that these Hankel determinants have an
 interpretation as the (exponential) moment generating function	for the linear statistic (\ref{L})
 in the Laguerre and Jacobi unitary ensembles, respectively. One of the main points of the present paper
 is to identify inter-relations between these moment generating functions and other probabilistic quantities
 in random matrix theory, and to detail some consequences.

 Long before the results of Theorems \ref{T1} and \ref{T2}, the $\sigma$-forms of the Painlev\'e equations
 were shown to characterise particular gap probabilities, i.e.~the probability that a certain interval is
 free of eigenvalues, in a variety of random matrix ensembles and their scaling limits. The pioneering work
 along these lines is that of Jimbo et al.~\cite{JMMS}, who expressed the probability of there being no eigenvalues in
 an interval of size $s$ in the bulk scaled circular unitary ensemble, or equivalently bulk scaled Gaussian unitary ensemble ---
 see e.g.~\cite[Ch.~7]{F} for a precise meaning --- in terms of a particular $\sigma$-form Painlev\'e V transcendent.
 Among many subsequent approaches to problems of this type, Forrester and Witte \cite{FW-2001}, \cite{FW-2002}, \cite{FW-2004}
 proceeded by making direct use of Okamoto's Hamiltonian formulation of the Painlev\'e equations PII--PVI. Of particular
 interest is the characterization obtained for the generalized gap probability
 \begin{equation}\label{def:LUE-ave}
 E_n(s;\alpha,\mu)=\frac{1}{C_n}\int_s^{\infty} \cdots \int_s^{\infty}\prod_{1\leqslant j<k \leqslant n}(\lambda_j-\lambda_k)^2 \prod_{l=1}^n\lambda_l^{\alpha}e^{-\lambda_l}(\lambda_l-s)^{\mu} d\lambda_l,
 \end{equation}
 where the normalisation $C_n$ is as in (\ref{LUE-Zn}). When  $\mu=0$, the above quantity is a particular gap probability
 for the Laguerre unitary ensemble ---  explicitly it is the probability that there are no eigenvalues in the interval $(0,s)$ ---
 while for other values of $\mu$ it is a generalisation of this quantity. Note that in the case $\mu=2$, and with $n$ replaced by
 $n-1$, (\ref{def:LUE-ave}) is proportional to the derivative of the probability that there are no eigenvalues in the interval $(0,s)$, and thus
directly gives the probability density function for the smallest eigenvalue.
		
\begin{thm}\label{T3} (Forrester and Witte \cite{FW-2002}.) Define $U_n(t;\alpha, \mu):= t {d \over dt} \log E_n(t;\alpha, \mu) - \mu n$.
	We have that $U_n(t;\alpha, \mu)$ satisfies the $\sigma$-Painlev\'e V differential equation (\ref{PV}) with parameters
\begin{equation}\label{PVp}
\nu_0 = 0, \quad \nu_1 = - \mu, \quad \nu_2 = n + \alpha, \quad \nu_3 = n.
\end{equation}
\end{thm}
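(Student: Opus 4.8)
The plan is to reduce the moving-endpoint average \eqref{def:LUE-ave} to a Hankel determinant on a fixed interval and then to run the ladder-operator / Toda machinery for orthogonal polynomials, in the same spirit as the proof of Theorem \ref{T2} by Chen and Dai \cite{CD}. First I would translate the integration variables by the endpoint, $\lambda_l \mapsto \lambda_l + t$ (writing $t$ for the parameter $s$, as in the statement), which turns \eqref{def:LUE-ave} into
\[
E_n(t;\alpha,\mu)=\frac{e^{-nt}}{C_n}\int_0^\infty\!\!\cdots\!\int_0^\infty \prod_{1\le j<k\le n}(\lambda_j-\lambda_k)^2\prod_{l=1}^n \lambda_l^{\mu}(\lambda_l+t)^{\alpha}e^{-\lambda_l}\,d\lambda_l .
\]
By \eqref{eq:Hankel-integral} this equals $e^{-nt}$ times a ($t$-independent) constant times the Hankel determinant $\widehat D_n(t)$ associated with the deformed Laguerre weight $\widehat w(x;t)=x^{\mu}(x+t)^{\alpha}e^{-x}$ on $(0,\infty)$. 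Consequently $U_n = t\frac{d}{dt}\log E_n - \mu n = \widehat H_n(t) - nt - \mu n$, where $\widehat H_n(t):=t\frac{d}{dt}\log\widehat D_n(t)$, so it suffices to find a second-order equation for $\widehat H_n$ and then absorb the linear terms $-nt-\mu n$ into the parameters.

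Next I would introduce the monic orthogonal polynomials $\{p_k(x)\}$ for $\widehat w(\cdot;t)$, their squared norms $h_k$ and three-term recurrence coefficients, and use $\widehat D_n=\prod_{k=0}^{n-1}h_k$. Since the potential $v=-\log\widehat w$ has $v'(x)=1-\mu/x-\alpha/(x+t)$, the kernel $\frac{v'(x)-v'(y)}{x-y}=\frac{\mu}{xy}+\frac{\alpha}{(x+t)(y+t)}$ is rational with simple poles only at $0$ and $-t$. Hence the ladder-operator coefficient functions $A_n(x),B_n(x)$ take the form $\frac{R_n}{x}+\frac{\widetilde R_n}{x+t}$ and $\frac{r_n}{x}+\frac{\widetilde r_n}{x+t}$ for auxiliary quantities $R_n,\widetilde R_n,r_n,\widetilde r_n$. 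Feeding these into the standard ladder-operator compatibility conditions $(S_1)$, $(S_2)$ and $(S_2')$ and matching the partial-fraction coefficients at $x=0$, at $x=-t$, and in the regular part should yield a closed set of algebraic relations among the auxiliary quantities and the recurrence coefficients.

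The $t$-dependence would then be supplied by Toda-type equations: differentiating the orthogonality relations in $t$ gives first-order equations for $h_k$ and the recurrence coefficients, and in particular expresses $\widehat H_n=t\frac{d}{dt}\log\widehat D_n$ as a rational function of the auxiliary quantities. Combining these evolution equations with the algebraic relations of the previous step lets me eliminate all auxiliary variables and arrive at a single second-order nonlinear ODE for $\widehat H_n$, which I would recognise as the $\sigma$-Painlev\'e V form \eqref{PV}; the two singular points $0$ and $-t$ are precisely what produce a PV (rather than PIII$'$) isomonodromy structure. Undoing the gauge shift $U_n=\widehat H_n-nt-\mu n$ and reading off the parameters should give exactly \eqref{PVp}. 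I expect the elimination in this last step to be the main obstacle, because the two singular points double the number of auxiliary quantities compared with a single-pole weight and make the algebra considerably heavier; checking that the linear shift lands precisely on $(\nu_0,\nu_1,\nu_2,\nu_3)=(0,-\mu,n+\alpha,n)$ rather than on some relabelling requires care.

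An alternative, and the route actually taken in \cite{FW-2002}, is to bypass the elimination by working in Okamoto's Hamiltonian formulation directly: one identifies $E_n$ (up to the same gauge factor) as a particular member of the Okamoto $\tau$-function sequence for Painlev\'e V, after which the relation $\frac{d}{dt}\log\tau=H_V$, together with the general fact that a suitably shifted $tH_V$ satisfies \eqref{PV}, gives the result immediately. There the difficulty is shifted to pinning down which member of the sequence and which parameter specialisation correspond to \eqref{def:LUE-ave}, for instance by testing the determinant against the Okamoto recurrences for small $n$, or against a known hard-edge Bessel evaluation at an accessible value of $\mu$.
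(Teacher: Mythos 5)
Your proposal is viable, but note first that the paper gives no proof of Theorem \ref{T3} at all: it is quoted from \cite{FW-2002}, whose argument is precisely your ``alternative'' route --- Okamoto's Hamiltonian theory of Painlev\'e V, identifying $E_n(t;\alpha,\mu)$ (up to a gauge factor) with a member of a PV $\tau$-function sequence, after which the shifted logarithmic derivative satisfies \eqref{PV} by the general theory. Your primary route --- shift $\lambda_l\mapsto\lambda_l+t$ and run ladder operators plus Toda evolution on the two-pole weight $x^{\mu}(x+t)^{\alpha}e^{-x}$ --- is a genuinely different and legitimate plan: for $v'$ rational with simple poles at $0$ and $-t$ this scheme is known to close and to produce a $\sigma$-PV equation, though you would need $\mu>0$ to suppress boundary terms at $x=0$ (general $\mu$ then recovered by analytic continuation), and the elimination is, as you admit, heavy. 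What your shift misses is the shortcut the paper itself records immediately after stating the theorem: changing variables $\lambda_l\mapsto \lambda_l s$ and then $\lambda_l\mapsto 1/\lambda_l$ in \eqref{def:LUE-ave} maps $s^{-n(\alpha+\mu+n)}E_n(s;\alpha,\mu)$, up to a constant, onto the Hankel determinant for the Chen--Dai weight \eqref{w2} with $\beta=\mu$ and $\alpha$ replaced by $-\alpha-\mu-2n$; under this dictionary one checks that $U_n$ coincides with the quantity $\tilde H_n$ of the remark following Theorem \ref{T2}, and the parameters \eqref{PVa} become a permutation of \eqref{PVp}, which suffices since \eqref{PV} is symmetric in the $\nu_j$. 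In other words, a scale-and-invert change of variables, instead of your additive shift, lands exactly on the already-proved Theorem \ref{T2} and yields Theorem \ref{T3} with no new Painlev\'e analysis, whereas your shift produces a new weight for which the whole ladder-operator machinery must be rebuilt from scratch. So: your plan works and is self-contained; the Okamoto route of \cite{FW-2002} is the actual source and the one this paper's framework is modelled on; and the cheapest correct proof, given what is already in the paper, is the reduction of Theorem \ref{T3} to Theorem \ref{T2}.
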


Suppose we change variables $\lambda_l \mapsto \lambda_l s$ in (\ref{def:LUE-ave}), and then $\lambda_l \mapsto 1/ \lambda_l$.
This shows
$$
 E_n(s;\alpha,\mu)=\frac{1}{C_n} s^{(\alpha + \mu + n)n}\int_0^{1} \cdots \int_0^{1}\prod_{1\leqslant j<k \leqslant n}(\lambda_j-\lambda_k)^2 \prod_{l=1}^n\lambda_l^{-(\alpha+ \mu + 2n)}e^{-s/\lambda_l}(1 - \lambda_l)^{\mu} d\lambda_l.
$$
Hence, up to the normalisation $C_n$, we can identify $s^{-n(\alpha + \mu + n)} E_n(s;\alpha,\mu)$ in the
case that $\mu = \beta$, $\alpha \mapsto - \alpha - \mu - 2n$, with the multiple integral
\eqref{eq:Hankel-integral} in the case of the weight (\ref{w2}). Indeed we see that Theorems 	\ref{T2} and \ref{T3}
map to each other with this identification of the functions of $t$ and parameters (with regard to the latter, note
that (\ref{PV}) is symmetric in $\{\nu_j\}$ so the labelling does not matter).

There is also a very simple relationship between the multiple integrals
\eqref{eq:Hankel-integral} with the weights (\ref{w0})	and (\ref{w2}). In the latter, we change variables
$\lambda_l \mapsto \lambda_l/\beta$, $l=1,\dots,n$, make the replacement
$t \mapsto t/\beta$, and take $\beta \to \infty$. Up to a scaling, the deformed Jacobi weight (\ref{w2})
then limits to the deformed Laguerre weight (\ref{w0}), and the same holds true for the corresponding
multiple integrals. In keeping with this, one sees that upon making the change of variables
$t \mapsto t/\beta$ in (\ref{J1}), then taking the limit $\beta \to \infty$, the differential equation
reduces to (\ref{H-equation}), which therefore is a particular degeneration of PV to PIII$'$.

Random matrix theory suggests another degeneration, starting now with \eqref{def:LUE-ave}.
Thus it is well known that the eigenvalues of the Laguerre unitary ensemble in the
neighbourhood of the origin form a well defined large $n$ limiting state upon the so-called
hard edge scaling
$\lambda \mapsto {\lambda \over 4 n}$ (the factor of 4 in the denominator is just for
convenience); see e.g.~\cite[\S 7.2.1]{F}.  This suggests that  \eqref{def:LUE-ave},
with an appropriate modification of the normalisation, admits a well defined limit
upon setting $s=\frac{t}{4n}$ and taking $n\to\infty$.
Indeed one sees that the
$\sigma$-Painlev\'e V differential equation (\ref{PV}) with parameters
(\ref{PVp}) reduces to the
$\sigma$-Painlev\'e III$'$ differential equation (\ref{eq: sigma-equation-1})
with parameters
\begin{equation}\label{def:parameter}
(v_1,v_2)=(\alpha+\mu, \alpha-\mu).
\end{equation}
On the other hand, we know from Remark \ref{R2} that up to a change of scale and linear shift the
$\sigma$-Painlev\'e III$'$ equation is equivalent to (\ref{eq:h}), which for the parameters
(\ref{hv}) has a solution in terms of the generating function $M_n(t)$
(\ref{def: generating function}). To match parameters requires that we set
\begin{equation}\label{hv1}
\alpha = n, \quad \mu = n + \alpha
\end{equation}
in (\ref{def:parameter}); here the variables on the right hand side are
to be considered distinct from the use of $n$ and $\alpha$ in (\ref{def:LUE-ave}).

How then is the hard edge limit of (\ref{def:LUE-ave}) related to $M_n(t)$?
Insight into this question can be obtained by noting from (\ref{hv1}) that the case of
interest in (\ref{def:LUE-ave}) involves $\alpha \in \mathbb Z^+$.
Proceeding independent of the Painlev\'e theory, Forrester and Hughes \cite{FH} showed
that in this circumstance, up to the precise detail of the choice of normalisation
in (\ref{def:LUE-ave}),
\begin{equation}\label{eq: hard edge-limit}
\lim_{n\to\infty}E_n\Big ({t \over 4n};\alpha,\mu \Big ) = e^{-t/4}t^{-\mu\alpha/2}\det \biggl[I_{j-k+\mu}(\sqrt{t} ) \biggr]_{j,k=0}^{\alpha-1}.
\end{equation}
Note that since the entries in the determinant depend of the difference $j-k$, this is a particular
Toeplitz  determinant, whereas according to (\ref{MD}) $M_n(t)$ is a Hankel determinant.
Nonetheless, both are structured determinants, and moreover it is a general fact (see
e.g.~\cite[Eq.~(5.76)]{F}) that with $c_j := {1 \over 2 \pi} \int_{-\pi}^\pi
e^{i j \theta} W(e^{i \theta}) \, d \theta$
\begin{align}\label{Tp}
\det [c_{j-k}]_{j,k=0}^{n-1} & = {1 \over (2 \pi)^n n!} \int_{-\pi}^\pi  \cdots
 \int_{-\pi}^\pi  \prod_{1\le j < k \le n} |
 e^{i \theta_j} - e^{i \theta_k} |^2 \,
 \prod_{l=1}^n  W(e^{i \theta_j}) \, d \theta_l
 \nonumber \\
 & = \Big \langle  \prod_{j=1}^n W(e^{i \theta_j}) \Big \rangle_{\rm CUE},
 \end{align}
 (cf.~(\ref{MT}))
 where the average is over the eigenvalues of a random matrix from the circular unitary
 ensemble (or equivalently a random matrix chosen from the set of unitary matrices
 $U(n)$ with Haar measure; see e.g.~\cite[\S 2.2.]{F}). As
observed in \cite[Eq.~(4.32)]{FW-2002}, applying (\ref{Tp}) to (\ref{eq: hard edge-limit})
tells us that up to normalisation, and assuming $\alpha \in \mathbb Z^+$,
\begin{equation}
\lim_{n\to\infty}E_n\Big ({t \over 4n};\alpha,\mu \Big ) = e^{-t/4} t^{-\mu \alpha/2}
\Big \langle e^{{1 \over 2} \sqrt{t} {\rm Tr} \,(U + \bar{U})} (\det U)^{-\mu}
\Big \rangle_{U \in {\rm CUE}}.
\end{equation}

In \cite{FW-2002}, a direct derivation of the fact that
the RHS of (\ref{eq: hard edge-limit})
relates to a so-called $\tau$-function in
Okamoto's theory of the Hamiltonian theory of PIII$'$ was given; for the latter see the brief summary
in subsection \ref{S2.1} below. The main objective of the present study is to show how this
working naturally implies a second $\tau$-function sequence
(the sequence is labelled by $n$), corresponding to the same
parameters, which can be identified with $M_n(t)$. The required working is carried out in Section \ref{S2} below.

\subsection{Asymptotics of $M_n(t)$}
From both a random matrix theory viewpoint, and for application to the statistics of the Wigner delay
time, it is of interest to quantify the large $n$ form of $M_n(t)$. In the Laguerre unitary ensemble with
the scaling of the eigenvalues
\begin{equation}\label{sL}
\lambda \mapsto n \lambda
\end{equation}
(this has the effect of the leading order density now being on a finite interval), and for a smooth linear
statistic, the mean is of order $n$, the variance is of order unity, and the cumulant of order
$k$ ($k > 2$) decays like $n^{2-k}$. This can be seen from a loop equation analysis; see
e.g.~\cite{FRW}. The fact that the higher order cumulants tend to zero implies that in this setting the limiting distribution is a
Gaussian. We know from \cite[\S 2.1]{CMOS19} that without scaling of the
eigenvalues $\langle L \rangle = {n \over \alpha}$. In fact for the singular linear statistic $L$, without scaling all cumulants of ${1 \over n}L$ are also
of order unity. Their precise form will be discussed in Section \ref{S3}, and related to asymptotic results contained
in \cite{DXZ}. The situation is different in the case $\alpha = n$,
as is relevant to the Wigner time delay problem: the cumulants
then exhibit the behaviour typical of a smooth linear statistic
\cite{MS}. We will use Theorem \ref{T1} to compute their leading
large $n$ form, reclaiming results from \cite{MS}.

\section{Hankel determinants of Bessel functions as a $\tau$-function sequence in Okamoto's theory of PIII$'$}\label{S2}

\subsection{The Hankel determinant}
Our first aim is to place the Hankel determinant corresponding to $M_n(t)$ in the context of 
Okamoto's Hamiltonian formulation of PIII$'$.
Fundamental for this purpose is the observation that
like in
(\ref{eq: hard edge-limit}), the entries are in fact particular Bessel functions, although now of the second kind.

\begin{lem}
With the Hankel determinant defined in \eqref{def: Hankel det}, we have
\begin{equation} \label{Hankel- D-Bes}
D_n(t)=(-1)^{\frac{n(n-1)}{2}}2^{n}t^{\frac{n(n+\alpha)}2}\det\biggl[ K_{j-k+n+\alpha}(2\sqrt{t}) \biggr]_{j,k=0}^{n-1},
\end{equation}
where $K_\nu$ is the modified Bessel function of the second kind.
\end{lem}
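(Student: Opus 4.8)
The plan is to reduce the whole statement to a single moment evaluation, after which the determinantal identity follows by elementary manipulations. First I would compute each moment $\mu_j(t)$ in closed form by recognising the integral in \eqref{def: moment} as an instance of the classical integral representation of the modified Bessel function of the second kind,
\[
K_\nu(z)=\tfrac12\Bigl(\tfrac{z}{2}\Bigr)^{\nu}\int_0^\infty u^{-\nu-1}\exp\Bigl(-u-\tfrac{z^2}{4u}\Bigr)\,du,\qquad \Re(z^2)>0
\]
(see e.g.\ \cite[Eq.~(10.32.10)]{dlmf}). Taking $z=2\sqrt t$, so that $z^2/4=t$, and choosing $\nu=-(j+\alpha+1)$, so that $u^{-\nu-1}=u^{\,j+\alpha}$ matches the integrand $x^{j}w(x;t)=x^{j+\alpha}e^{-x-t/x}$, one obtains after invoking the symmetry $K_{-\nu}=K_\nu$ the identity
\[
\mu_j(t)=2\,t^{(j+\alpha+1)/2}\,K_{j+\alpha+1}\bigl(2\sqrt t\bigr).
\]
Thus every entry $\mu_{j+k}(t)$ of the Hankel matrix in \eqref{def: Hankel det} is a $K$-Bessel function carrying a power-of-$t$ prefactor.

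The second step is pure determinantal bookkeeping. Substituting the last display into \eqref{def: Hankel det} gives
\[
D_n(t)=\det\Bigl[\,2\,t^{(j+k+\alpha+1)/2}\,K_{j+k+\alpha+1}\bigl(2\sqrt t\bigr)\Bigr]_{j,k=0}^{n-1}.
\]
The constant factor $2$, common to all entries, contributes $2^n$, while the prefactor splits as $t^{(j+k+\alpha+1)/2}=t^{(\alpha+1)/2}\,t^{j/2}\,t^{k/2}$. Pulling the common $t^{(\alpha+1)/2}$ out once per row, the row-dependent $t^{j/2}$ out of each row, and the column-dependent $t^{k/2}$ out of each column produces the total exponent
\[
\frac{n(\alpha+1)}{2}+2\cdot\frac12\sum_{j=0}^{n-1}j=\frac{n(\alpha+1)}{2}+\frac{n(n-1)}{2}=\frac{n(n+\alpha)}{2},
\]
so that $D_n(t)=2^n\,t^{n(n+\alpha)/2}\det[K_{j+k+\alpha+1}(2\sqrt t)]_{j,k=0}^{n-1}$.

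The final step converts this Hankel-type determinant, whose entries depend on $j+k$, into the Toeplitz-type determinant appearing in \eqref{Hankel- D-Bes}, whose entries depend on $j-k$. Reversing the order of the columns, i.e.\ applying the permutation $k\mapsto n-1-k$, sends the index $j+k+\alpha+1$ to $j+(n-1-k)+\alpha+1=j-k+n+\alpha$, which is exactly the target subscript. This column reversal multiplies the determinant by the sign of the reversal permutation on $n$ letters, which has $\binom{n}{2}=n(n-1)/2$ inversions and hence sign $(-1)^{n(n-1)/2}$; the claimed formula \eqref{Hankel- D-Bes} then follows at once.

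I expect no serious obstacle here: the mathematical content is concentrated entirely in the first step, and the only points demanding genuine care are getting the parameter matching $\nu=-(j+\alpha+1)$ and the symmetry $K_{-\nu}=K_\nu$ right, correctly tracking the sign $(-1)^{n(n-1)/2}$ of the column reversal, and checking that the three separate powers of $t$ really do collapse to the stated exponent $n(n+\alpha)/2$.
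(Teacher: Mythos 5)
Your proposal is correct and follows essentially the same route as the paper: evaluating the moments $\mu_j(t)=2\,t^{(j+\alpha+1)/2}K_{j+\alpha+1}(2\sqrt t)$ via the classical integral representation of $K_\nu$ (together with $K_{-\nu}=K_\nu$), extracting the powers of $t$ and the factor $2^n$ from the Hankel determinant, and reversing the columns to produce the Toeplitz form with sign $(-1)^{n(n-1)/2}$. The only difference is that you spell out the bookkeeping (the parameter matching, the inversion count, the exponent arithmetic) more explicitly than the paper does.
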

\begin{proof}
Upon recalling the integral representation of the modified Bessel function of the second kind
\begin{equation}\label{modified Bes-integral}
K_v(z)=K_{-v}(z)=\frac{1}{2}\left(\frac{z}{2}\right)^{v}\int_0^{\infty} e^{-x-\frac{z^2}{4x}}\frac{dx}{x^{v+1}}, \qquad |\arg z|< \frac{\pi}{4};
\end{equation}
see e.g.~\cite{as,dlmf}, it follows that the moments \eqref{def: moment} can be expressed as
$$
\mu_j(t) = 2 \, t^{\frac{j+\alpha+1}{2}}K_{j+1+\alpha}(2\sqrt{t}).
$$
Substituting in the Hankel determinant and extracting factors gives
$$
D_n(t)
=2^{n}t^{\frac{n(n+\alpha)}2}\det \biggl[K_{j+k+1+\alpha}(2\sqrt{t}) \biggr]_{j,k=0}^{n-1}.
$$
The form \eqref{Hankel- D-Bes}  follows by arranging the columns in reverse i.e.,   $(1,2,...,n)\mapsto (n,n-1,...,1)$.
\end{proof}

As will be detailed below, the significance of this form rests with knowledge of the fact that
$K_{v}(2\sqrt{t})$ satisfies a second order linear differential equation, which determines a seed solution
to a sequence of $\tau$-functions for Painlev\'e III$'$ in Okamoto's theory.

\subsection{The Painlev\'e III$'$ Hamiltonian and $\tau$-function}\label{S2.1}

In Okamoto's Hamiltonian formulation of the Painlev\'e III$'$ equation
\cite{OK}, the Hamiltonian is
\begin{equation}\label{def: H}
tH=p^2q^2-(q^2+v_1q-t)p+\frac{v_1+v_2}{2}q;
\end{equation}
see also \cite[(32.6.24)]{dlmf}. Thus, eliminating $p$ from the Hamilton equations
\begin{equation}\label{Hamiltonian equation}
q'=\frac{\partial H}{\partial p}, \quad p'=-\frac{\partial H}{\partial q},
\end{equation}
gives the Painlev\'e III$'$  equation with parameters $(-4v_2, 4(v_1+1), 4,-4)$,
\begin{equation}\label{PIII equation}
q''=\frac{1}{q}q'^2-\frac{1}{t}q'+\frac{q^2}{t^2}(q-v_2)-\frac{1}{q}+\frac{v_1+1}{t};
\end{equation}
see \cite[(32.6.27)]{dlmf}.

As already remarked, it can be deduced that the modified Hamiltonian
(\ref{hH}) satisfies the nonlinear equation (\ref{eq:h}), and that the rescaling
and linear shift of the modified Hamiltonian \eqref{hH1} satisfies the $\sigma$-form of the
Painlev\'e III$'$ equation
(\ref{eq: sigma-equation-1}).

It is fundamental that  the Hamiltonians associated with the Painlev\'e equations satisfy B\"acklund transformations. For the Painlev\'e III$'$ Hamiltonian \eqref{def: H}, as observed by Okamoto, the B\"acklund transformations can be constructed in terms of elementary operators $s_0$,
$s_1$, and $s_2$. These operators have action on the parameters $v_1$ and $v_2$, and the functions $p$ and $q$, as  given in the following table.
\begin{table}[h]
  \centering
  \begin{tabular}{|c||c|c|c|c|c|}
   \hline
          & $v_1$ & $v_2$ & $p$ & $q$ & $t$  \\ \hline  & & & & & \\
     $s_0$ & $-1-v_2$ & $-1-v_1$ & $\displaystyle\frac{q}{t}\left[ q(p-1)-\frac{1}{2}(v_1-v_2) \right]+1$ & $\displaystyle-\frac{t}{q}$ & $t$ \\ [3ex]  $s_1$ & $v_2$ & $v_1$ & $p$ & $\displaystyle q+\frac{v_2-v_1}{2(p-1)}$ & $t$ \\ [2ex]
     $s_2$ & $v_1$ & $-v_2$ & $1-p$ & $-q$ & $-t$ \\ [1ex]
   \hline
  \end{tabular}
\medskip
  \caption{B\"acklund transformations relevant to the PIII$'$ Hamiltonian \eqref{def: H}; cf.~\cite[Table 4.1]{FW-2002}.} \label{piii-table}
\end{table}

Following Okamoto \cite{OK}, introduce the operator
\begin{equation}
  T_1=s_0s_2s_1s_2.
\end{equation}
Then we have the induced B\"acklund transformations
\begin{equation}
  T_1(H)=H|_{(v_1,v_2)\to T_1(v_1,v_2)} \qquad \textrm{with} \quad T_1(v_1,v_2)=(v_1+1,v_2+1).
\end{equation}
as can be checked from \eqref{def: H} and Table \ref{piii-table}.

Generally a $\tau$-function in a Hamiltonian theory of a Painlev\'e system has the defining property
that its logarithmic derivative with respect to $t$ equals $H(t)$.
By iterating the transformation $T_1$,  a sequence of $\tau$-functions for the Painlev\'e III$'$ can be
defined according to
\begin{equation}\label{def: tau and H sequence}
H[n]=\frac{d}{dt}\ln \tau[n] \qquad \textrm{with} \quad H[n]=T_1^nH=H|_{(v_1,v_2)\to(v_1+n,v_2+n)}.
\end{equation}
As shown by Okamoto \cite{OK}, this sequence has the significant property of
satisfying  the Toda lattice equation
\begin{equation}\label{eq: toda equation}
\delta^2\ln \bar{\tau}[n]=\frac{\bar{\tau}[n-1] \bar{\tau}[n+1]}{\bar{\tau}^2[n]}, \quad \delta:=t\frac{d}{dt},
\end{equation}
where
\begin{equation}\label{def: tau-bar}
\bar{\tau}[n](t)=t^{\frac{n^2}{2}}\tau[n]\Big (\frac{t}{4} \Big ).
\end{equation}
Thus knowledge of $\tau[0]$ and $\tau[1]$ determines the full sequence recursively.
Furthermore, if $\tau[0]=1$, it is shown in \cite{OK} that the full sequence has the  double Wronskian Hankel determinant form
\begin{equation}\label{tD}
\bar{\tau}[n](t) = \det \Big [ \delta^{j+k} \bar{\tau}[1](t) \Big ]_{j,k=0}^{n-1}.
\end{equation}

\subsection{Reduction of $\{ \bar{\tau}[n](t)\}$ to a Toeplitz determinant of modified Bessel functions}
A further key fact is that for a choice of parameters permitting $\tau[0]=1$, the next member
in the sequence is given in terms of modified Bessel functions.

\begin{pro} (Okamoto \cite{OK}.)
The special choice of parameters
$v_1=-v_2 $
in \eqref{def: H} is consistent with setting $\bar{\tau}[0]=1$.
For this choice of parameters,
the first member $\bar{\tau}[1]$ in the $\tau$-function sequence
as specified by \eqref{def: tau and H sequence} is given by
\begin{equation}\label{tau-1-seed solution}
t^\frac{v-1}{2}\bar{\tau}[1](t)=\mathcal{L}_v(\sqrt{t}) \qquad \textrm{with } v := v_1.
\end{equation}
Here $\mathcal{L}_v(t)$ is an arbitrary linear combination of the modified Bessel functions
\begin{equation} \label{def: Bes-Lv}
  \mathcal{L}_v(t)=aI_{v}(t)+be^{v\pi i}K_{v}(t),
\end{equation}
where $a$ and $b$ are constants independent of $v$. Moreover, the subsequent members $\bar{\tau}[n]$ are
expressed in terms of the double Wronskian Hankel determinant
\begin{equation}\label{eq: toda equation-solution}
t^{\frac{n(v-1)}{2}}\bar{\tau}[n](t)=\det\biggl[ \delta^{j+k}\mathcal{L}_v( \sqrt{t}) \biggr]_{j,k=0}^{n-1}, \qquad \delta=t\frac{d}{dt}.
\end{equation}
\end{pro}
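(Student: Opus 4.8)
The plan is to establish the three assertions in turn: the admissibility of $\bar\tau[0]=1$, the Bessel form \eqref{tau-1-seed solution} of $\bar\tau[1]$, and then \eqref{eq: toda equation-solution}, which will follow almost formally from the double Wronskian representation \eqref{tD}. First I would put $v:=v_1=-v_2$, so that the term $\tfrac{v_1+v_2}{2}q$ in \eqref{def: H} drops out and the Hamiltonian factorises as $tH=p\big(pq^2-q^2-vq+t\big)$. On the branch $p\equiv 0$ one has $H\equiv 0$, and this branch is invariant under the flow \eqref{Hamiltonian equation}: at $p=0$ we get $\partial(tH)/\partial q=\tfrac{v_1+v_2}{2}=0$, hence $p'=0$, while $q'=\partial H/\partial p$ reduces to the Riccati equation $tq'=t-vq-q^2$ for $q=:q_0$. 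Since then $\tfrac{d}{dt}\ln\tau[0]=H\equiv 0$, we may take $\tau[0]=1$, and \eqref{def: tau-bar} gives $\bar\tau[0]=1$.

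The essential step is to identify $\tau[1]$. For this I would apply the translation $T_1=s_0s_2s_1s_2$ of Table \ref{piii-table} to the seed $(p,q)=(0,q_0)$, producing the level-one solution at parameters $(v+1,1-v)$, and then evaluate the defining relation $\tfrac{d}{dt}\ln\tau[1]=H[1]$ from \eqref{def: tau and H sequence}. The expected outcome is the clean identity $tH[1]=q_0$, equivalently $\tau[1]=w$, where $w$ is the linearising function of the seed Riccati equation defined through $q_0=t\,(\ln w)'$; indeed this substitution converts $tq_0'=t-vq_0-q_0^2$ into the linear second order equation $tw''+(1+v)w'-w=0$, so that $H[1]=(\ln w)'$. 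This B\"acklund computation is the main obstacle: evaluated on the degenerate seed, the operators $s_1,s_2$ generate indeterminate expressions such as $\tfrac{v_2-v_1}{2(p-1)}$ with numerator and denominator both identically zero, so the transformation has to be carried out at generic $(v_1,v_2)$ and the specialisation $v_2\to -v_1$ taken as a limit, with the canonical coordinates tracked carefully through each elementary operator.

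Granting $\tau[1]=w$, it remains to recognise $w$ as a Bessel function. Changing variables to $\xi=2\sqrt t$ turns $tw''+(1+v)w'-w=0$ into $\ddot w+\tfrac{2v+1}{\xi}\dot w-w=0$, and the further substitution $w=\xi^{-v}u$ brings this to the modified Bessel equation $\ddot u+\tfrac1\xi\dot u-\big(1+\tfrac{v^2}{\xi^2}\big)u=0$, whose general solution is precisely the combination $\mathcal L_v(\xi)=aI_v(\xi)+be^{v\pi i}K_v(\xi)$ of \eqref{def: Bes-Lv} (the two constants reflecting the two-dimensional solution space). Hence $w(t)=(2\sqrt t)^{-v}\mathcal L_v(2\sqrt t)$, and feeding this through $\bar\tau[1](t)=t^{1/2}\tau[1](t/4)$, together with $2\sqrt{t/4}=\sqrt t$, yields $\bar\tau[1](t)=t^{(1-v)/2}\mathcal L_v(\sqrt t)$, i.e.\ \eqref{tau-1-seed solution}.

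Finally, because $\tau[0]=1$, the double Wronskian Hankel formula \eqref{tD} applies and gives $\bar\tau[n](t)=\det[\delta^{j+k}\bar\tau[1](t)]_{j,k=0}^{n-1}$. Writing $a=\tfrac{1-v}{2}$ so that $\bar\tau[1]=t^{a}\mathcal L_v(\sqrt t)$, I would use the commutation identity $\delta^m(t^{a}f)=t^{a}(\delta+a)^m f$ to factor $t^{a}$ out of every entry, hence $t^{na}$ from the determinant, leaving $\bar\tau[n]=t^{na}\det[(\delta+a)^{j+k}\mathcal L_v(\sqrt t)]$. The proof concludes by noting that such a Hankel determinant is unchanged under the shift $\delta\mapsto\delta+a$ with $a$ constant: expanding $(\delta+a)^{j+k}=\sum_i\binom{j+k}{i}a^{j+k-i}\delta^i$ and using the Vandermonde identity $\sum_{r+s=i}\binom jr\binom ks=\binom{j+k}{i}$ exhibits $\big[(\delta+a)^{j+k}\mathcal L_v\big]$ as $P\big[\delta^{j+k}\mathcal L_v\big]P^{\mathsf T}$ with $P$ unit lower triangular, so the two determinants coincide. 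Multiplying through by $t^{-na}=t^{n(v-1)/2}$ then yields exactly \eqref{eq: toda equation-solution}.
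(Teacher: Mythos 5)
Your overall strategy is sound and every step you actually carry out is correct, but the proof as written has one acknowledged hole, and it sits at the heart of the proposition. For the identification of $\bar{\tau}[1]$, the paper does not perform the B\"acklund computation at all: it cites Okamoto \cite{OK} (see also \cite[\S 8.2.4]{F}) for the fact that $t^{-1/2}\bar{\tau}[1](t)$ solves the linear equation $tu''+(v+1)u'-\tfrac14 u=0$, and then simply solves that equation. Your route --- apply $T_1$ to the Riccati seed $(p,q)=(0,q_0)$, obtain $tH[1]=q_0$, and linearise via $q_0=t(\ln w)'$ --- is the correct underlying mechanism, and your target equation $tw''+(1+v)w'-w=0$ is exactly the paper's cited equation after the rescaling $t\mapsto t/4$ and the $t^{1/2}$ prefactor coming from \eqref{def: tau-bar}. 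You also correctly diagnose the degeneracy (after the first $s_2$ one has $p=1$ and equal parameters, so $s_1$ produces $\tfrac{v_2-v_1}{2(p-1)}=\tfrac{0}{0}$). But you then write ``granting $\tau[1]=w$'': the limiting computation that resolves the indeterminacy is never done, so as a self-contained argument this step is a genuine gap. It is, to be fair, precisely the step the paper outsources by citation; your proof reaches parity with the paper's only if you likewise invoke \cite{OK} here, and is strictly incomplete otherwise.

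The remaining steps are correct and in places more explicit than the paper's, and your final step takes a genuinely different route. The consistency of $\bar{\tau}[0]=1$ is the paper's argument (that $H=p=0$ is admissible) with useful detail added: invariance of the branch $p\equiv 0$ and the induced Riccati equation for $q$. Your solution of the linear equation via $\xi=2\sqrt t$, $w=\xi^{-v}u$ correctly lands on the modified Bessel equation and, fed through \eqref{def: tau-bar}, recovers \eqref{tau-1-seed solution} exactly. For \eqref{eq: toda equation-solution}, the paper argues via the Toda structure: the Toda equation \eqref{eq: toda equation} is invariant under $\bar{\tau}[n]\mapsto t^{n\kappa}\bar{\tau}[n]$ (the powers of $t$ cancel on the right-hand side and $\delta^2\ln t^{n\kappa}=0$), so applying this with $\kappa=(v-1)/2$ converts \eqref{tD} into \eqref{eq: toda equation-solution}. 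You instead prove the determinant identity directly by linear algebra, using $\delta^m(t^{\kappa}f)=t^{\kappa}(\delta+\kappa)^m f$ and then exhibiting $\bigl[(\delta+\kappa)^{j+k}f\bigr]=P\bigl[\delta^{j+k}f\bigr]P^{\mathsf T}$ with $P$ unit lower triangular via Vandermonde's identity. Both arguments are valid; yours is more self-contained, while the paper's exploits machinery it has already set up. (A cosmetic point: you reuse the letter $a$ for the exponent $\tfrac{1-v}{2}$, which collides with the coefficient $a$ in \eqref{def: Bes-Lv}; choose a different symbol.)
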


\begin{proof}
	With $v_1 = -v_2$ as the $n=0$ case, we see from (\ref{def: H}) and the associated Hamilton equations that we are free to take
	$H=p=0$. Recalling (\ref{def: tau and H sequence}) we see that it is indeed valid to choose $\tau[0] = 1$.
	
It has been shown in \cite{OK} (see also \cite[\S 8.2.4]{F}) that the corresponding first member of the modified
$\tau$-function sequence \eqref{def: tau-bar}, upon multiplication by $t^{-1/2}$, is determined as a solution
of the second-order linear differential equation
$$
t u'' + (v+1) u' - {1 \over 4} u = 0.
$$
The two linearly independent solutions of this equation are the modified Bessel functions given in \eqref{def: Bes-Lv},
thus implying  \eqref{tau-1-seed solution}.

We can check that if $\{\bar{\tau}[n] \}_{n=0,1,\dots}$ is a solution of the Toda lattice equation
\eqref{eq: toda equation} with $\bar{\tau}[0]=1$, then $\{t^{n \kappa}\bar{\tau}[n] \}_{n=0,1,\dots}$
is also a solution for any $\kappa$, obtained by replacing $\bar{\tau}[1]$ by $t^\kappa \bar{\tau}[1]$
in \eqref{eq: toda equation-solution}.
Making use of this result with $\kappa = (v-1)/2$ gives \eqref{eq: toda equation-solution}.
\end{proof}

As found earlier \cite{FW-2002} in the case of \eqref{def: Bes-Lv} with $b=0$, using recursive
properties of the Bessel functions one may simplify the determinant in \eqref{eq: toda equation-solution} such that it is independent of the operator $\delta$. In fact the double Wronskian
Hankel form then reduces to a Toeplitz form.
\begin{pro}\label{P2}
We have
\begin{equation}\label{Bes-det}
 \det \biggl[ \delta^{j+k} \mathcal{L}_v (\sqrt{t}) \biggr]_{j,k=0}^{n-1}=\Big ( {t \over 4 } \Big )^{\frac{n(n-1)}{2}} \det\biggl[ \mathcal{L}_{j-k+v}(\sqrt{t}) \biggr]_{j,k=0}^{n-1}.
\end{equation}
\end{pro}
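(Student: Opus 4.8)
The plan is to reduce the double Wronskian Hankel determinant on the left to the Toeplitz determinant on the right by two successive triangular factorisations, each driven by a contiguous relation for $\mathcal{L}_v$. Throughout I write $z=\sqrt t$, so that $\delta=t\frac{d}{dt}=\frac z2\frac{d}{dz}$. First I would record the two recurrences satisfied by $\mathcal{L}_v(\sqrt t)$. The standard relations $I_v'=\tfrac12(I_{v-1}+I_{v+1})$ and $I_{v-1}-I_{v+1}=\tfrac{2v}{z}I_v$ hold for $I_v$, while the corresponding relations for $K_v$ carry the opposite signs; the key point is that the prefactor $e^{v\pi i}$ in \eqref{def: Bes-Lv} exactly compensates these sign changes (since $e^{(v\pm1)\pi i}=-e^{v\pi i}$), so that $\mathcal{L}_v$ obeys the same two relations as $I_v$. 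Eliminating one index and passing to $\delta$, these become the raising and lowering relations $z\mathcal{L}_{v+1}=(2\delta-v)\mathcal{L}_v$ and $z\mathcal{L}_{v-1}=(2\delta+v)\mathcal{L}_v$. Because they coincide with the $I$-Bessel relations used in \cite{FW-2002}, the argument is insensitive to the values of $a,b$. I would also record the operator identity $\delta z^a=z^a(\delta+\tfrac a2)$, equivalently $\delta^i(z^k f)=z^k(\delta+\tfrac k2)^i f$.

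Next I would read $\det[\delta^{j+k}\mathcal{L}_v]$ as the $\delta$-Wronskian of the functions $\delta^k\mathcal{L}_v$, $k=0,\dots,n-1$, and change the generating set. Iterating the raising relation (pushing each factor of $z$ through using $\delta z^k=z^k(\delta+\tfrac k2)$) gives
\[
z^k\mathcal{L}_{v+k}=\prod_{i=0}^{k-1}(2\delta-v-2i)\,\mathcal{L}_v=2^k\delta^k\mathcal{L}_v+(\text{constant combination of }\delta^j\mathcal{L}_v,\ j<k),
\]
a triangular change of generating set with constant coefficients and diagonal entries $2^k$. Since a Wronskian is multiplied by the determinant of any constant-coefficient change of its entries, this replaces the generators by $\{z^k\mathcal{L}_{v+k}\}$ at the cost of a factor $2^{n(n-1)/2}$; factoring $z^k$ out of the $k$-th column via $\delta^i(z^k\mathcal{L}_{v+k})=z^k(\delta+\tfrac k2)^i\mathcal{L}_{v+k}$ then produces
\[
\det\bigl[\delta^{j+k}\mathcal{L}_v\bigr]=(z/2)^{n(n-1)/2}\,\det\bigl[(\delta+\tfrac k2)^i\mathcal{L}_{v+k}\bigr]_{i,k=0}^{n-1}.
\]

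The crux is to evaluate this last determinant, and here I would invoke the lowering relation in the shifted form $(\delta+\tfrac k2)\bigl[(z/2)^{i'}\mathcal{L}_{v+k-i'}\bigr]=(z/2)^{i'+1}\mathcal{L}_{v+k-i'-1}+(i'-\tfrac v2)(z/2)^{i'}\mathcal{L}_{v+k-i'}$. Inducting on the power yields $(\delta+\tfrac k2)^i\mathcal{L}_{v+k}=\sum_{i'=0}^{i}c_{i,i'}(z/2)^{i'}\mathcal{L}_{v+k-i'}$ with $c_{i,i}=1$. This exhibits the matrix $[(\delta+\tfrac k2)^i\mathcal{L}_{v+k}]$ as a product $CB$ of the unit lower-triangular constant matrix $C=(c_{i,i'})$ and the matrix $B$ with $B_{i',k}=(z/2)^{i'}\mathcal{L}_{v+k-i'}$; hence its determinant equals $\det B=(z/2)^{n(n-1)/2}\det[\mathcal{L}_{v+k-i'}]$, and the latter is the transpose of the claimed Toeplitz determinant. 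Multiplying the two factors of $(z/2)^{n(n-1)/2}$ and using $(z/2)^{n(n-1)}=(t/4)^{n(n-1)/2}$ gives \eqref{Bes-det}.

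The main obstacle is precisely the $k$-independence of the coefficients $c_{i,i'}$ in the last step: it is what allows the second reduction to be a genuine factorisation by a \emph{constant} matrix rather than a column operation with function-valued weights. This independence rests on the exact cancellation, visible in the shifted lowering relation above, between the operator shift $\tfrac k2$ coming from extracting the column powers of $z$ and the index-dependent shift $\tfrac{v+k-i'}2$ intrinsic to the lowering relation, leaving a coefficient $i'-\tfrac v2$ that no longer sees $k$. Verifying this cancellation, and keeping accurate track of the accumulated powers of $z$ and factors of $2$ so that they assemble into $(t/4)^{n(n-1)/2}$, is the only genuinely delicate bookkeeping.
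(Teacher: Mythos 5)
Your proof is correct and is in essence the paper's own argument: both hinge on the observation that the $e^{v\pi i}$ prefactor makes $\mathcal{L}_v$ obey the same raising/lowering recurrences as $I_v$ independently of $a,b$ (cf.~DLMF 10.29.2), so that the determinant reduction of \cite[Prop.~4.4]{FW-2002} applies verbatim. The only difference is that the paper simply cites that reduction, whereas you execute it explicitly --- the triangular change of generators via $z^k\mathcal{L}_{v+k}=\prod_{i=0}^{k-1}(2\delta-v-2i)\,\mathcal{L}_v$, followed by the $k$-independent lowering factorisation --- which is precisely the working being referenced.
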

\begin{proof}
The working of \cite[Prop.~4.4]{FW-2002} is to be followed.
This is possible because $I_{v}(t)$ and $e^{v\pi i}K_{v}(t)$ satisfy the same recurrence relations.
Indeed, we have
$$
\mathcal{L}_v'(t)=\mathcal{L}_{v+1}(t)+\frac{v}{t}\mathcal{L}_v(t), \qquad
\mathcal{L}_v'(t) =\mathcal{L}_{v-1}(t)-\frac{v}{t}\mathcal{L}_v(t)
$$
independent of the constants $a$ and $b$ in \eqref{def: Bes-Lv};
see \cite[Eq.(10.29.2)]{dlmf}. The working of  \cite[Prop.~4.4]{FW-2002} only requires the validity of these
 recurrences to deduce \eqref{Bes-det} from \eqref{eq: toda equation-solution}.
\end{proof}

\begin{cor}  \label{lem-last}
The function
\begin{equation}\label{sigma-1}
\hat{\sigma}_n(t;v)=-t\frac{d}{dt}\ln\left(e^{-t/4}t^{\frac{v^2}{2}}\det \Big [ \mathcal{L}_{j-k+v}(\sqrt{t}) \Big ]_{j,k=0}^{n-1}\right)
\end{equation}
satisfies the $\sigma$-form of the Painlev\'e III$'$ equation \eqref{eq: sigma-equation-1}
with parameters
\begin{equation}\label{2.17}
  (v_1,v_2)=(v+n, -v+n).
\end{equation}
\end{cor}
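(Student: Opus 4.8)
The plan is to identify $\hat\sigma_n(t;v)$ with the $\sigma$-transcendent attached, via Remark \ref{R2}, to the $n$-th member of the $\tau$-function sequence for the seed parameters $(v_1,v_2)=(v,-v)$. All of the Painlev\'e content is already in place: the Proposition of Okamoto together with Proposition \ref{P2} supply the determinant as a $\tau$-function, while Remark \ref{R2} supplies the passage from the modified Hamiltonian to \eqref{eq: sigma-equation-1}. What remains is to assemble these ingredients and to track the powers of $t$ and the additive constants carefully.

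I would first write $\mathcal{D}_n(t):=\det[\mathcal{L}_{j-k+v}(\sqrt t)]_{j,k=0}^{n-1}$ for the determinant appearing in \eqref{sigma-1}. Combining \eqref{eq: toda equation-solution} with Proposition \ref{P2} expresses the modified $\tau$-function as
\[
\bar\tau[n](t)=t^{-\frac{n(v-1)}{2}}\Big(\frac{t}{4}\Big)^{\frac{n(n-1)}{2}}\mathcal{D}_n(t),
\]
so that, with $\delta:=t\frac{d}{dt}$,
\[
\delta\ln\bar\tau[n](t)=\frac{n(n-v)}{2}+\delta\ln\mathcal{D}_n(t),
\]
the constant $\tfrac{n(n-v)}{2}$ being the coefficient of $\ln t$ in the monomial prefactor. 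Under the shift $T_1^n$ the seed parameters $(v,-v)$ become precisely \eqref{2.17}, so $\bar\tau[n]$ is the modified $\tau$-function for those parameters.

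Next I would push this through the Hamiltonian dictionary. From $H[n]=(d/dt)\ln\tau[n]$ in \eqref{def: tau and H sequence} and $\bar\tau[n](t)=t^{n^2/2}\tau[n](t/4)$ in \eqref{def: tau-bar}, a short chain-rule computation gives
\[
\frac{t}{4}\,H[n]\Big(\frac{t}{4}\Big)=-\frac{n^2}{2}+\delta\ln\bar\tau[n](t).
\]
Substituting this into the modified Hamiltonian $h_n(t)=tH[n](t)+\tfrac14(v+n)^2-\tfrac{t}{2}$ of \eqref{hH}, and then into the affine shift \eqref{hH1} with $v_1v_2=(v+n)(-v+n)=n^2-v^2$, the associated transcendent
\[
\sigma_{III'}(t)=-h_n\Big(\frac{t}{4}\Big)+\frac{t}{8}+\frac{n^2-v^2}{4}
\]
reduces, after inserting the two displayed identities and collecting the additive constants, to
\[
\sigma_{III'}(t)=\frac{t}{4}-\frac{v^2}{2}-\delta\ln\mathcal{D}_n(t).
\]
By Remark \ref{R2} this $\sigma_{III'}$ solves \eqref{eq: sigma-equation-1} with parameters \eqref{2.17}.

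Finally, expanding the definition \eqref{sigma-1} directly yields $\hat\sigma_n(t;v)=-\delta\ln\big(e^{-t/4}t^{v^2/2}\mathcal{D}_n(t)\big)=\tfrac{t}{4}-\tfrac{v^2}{2}-\delta\ln\mathcal{D}_n(t)$, which is exactly the expression just obtained for $\sigma_{III'}$; hence $\hat\sigma_n\equiv\sigma_{III'}$ and the Corollary follows. I expect the only genuine obstacle to be this bookkeeping: one must check that the three rescalings --- the factor $t^{-n(v-1)/2}(t/4)^{n(n-1)/2}$ from Proposition \ref{P2}, the factor $t^{n^2/2}$ in \eqref{def: tau-bar}, and the shifts $-\tfrac{t}{2}$ in \eqref{hH} together with $+\tfrac{t}{8}$ and $v_1v_2/4$ in \eqref{hH1} --- combine to reproduce precisely the prefactor $e^{-t/4}t^{v^2/2}$ in \eqref{sigma-1}. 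Concretely, the linear term $+\tfrac{t}{4}$ is built half from $-\tfrac{t}{2}$ in $h_n$ (which becomes $-\tfrac{t}{8}$ at argument $t/4$) and half from the explicit $+\tfrac{t}{8}$ of \eqref{hH1}, accounting for the $e^{-t/4}$, while $-\tfrac{v^2}{2}$ accounts for $t^{v^2/2}$; there is no analytic difficulty beyond verifying these cancellations.
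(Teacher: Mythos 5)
Your proof is correct and takes essentially the same route as the paper's: both substitute \eqref{eq: toda equation-solution} and \eqref{Bes-det} into \eqref{sigma-1}, track the power-of-$t$ prefactors and additive constants through \eqref{def: tau and H sequence}, \eqref{def: tau-bar}, \eqref{hH} and \eqref{hH1} to identify $\hat{\sigma}_n(t;v)$ with the transcendent $\sigma_{III'}$ at parameters \eqref{2.17}, and then invoke the theory of Remark \ref{R2}. The only difference is that you write out explicitly the bookkeeping which the paper compresses into the single identity \eqref{eq: sigma-n-sigma-relation}, and your intermediate constants all check out.
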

\begin{proof}
Substituting \eqref{eq: toda equation-solution} and \eqref{Bes-det} into \eqref{sigma-1}, and use of the definitions   \eqref{hH},
\eqref{hH1}, \eqref{def: tau and H sequence} and \eqref{def: tau-bar} shows
\begin{equation}\label{eq: sigma-n-sigma-relation}
\hat{\sigma}_n(t)=-\frac{t}{4}H\Big (\frac{t}{4}\Big )+\frac{t}{4}-\frac{v(n+v)}{2}=-\sigma\Big (\frac{t}{4} \Big )+\frac{t}{8}+\frac{n^2-v^2}{4}.
\end{equation}
The stated result now follows from the theory noted below \eqref{hH}.
\end{proof}

By an appropriate choice of $v$, and with $a=0$ in
\eqref{def: Bes-Lv} the determinant in
\eqref{sigma-1} can be identified with
the Hankel determinant $D_n(t)$ as given in \eqref{Hankel- D-Bes}.

\begin{thm} \label{main-thm}
Let the Hankel determinant $D_n(t)$ be defined in \eqref{def: Hankel det}.  We have
\begin{equation}\label{sigma-2}
t\frac{d}{dt}\ln D_{n}(t)=-\hat{\sigma}_n(4t;n+\alpha)+t-\frac{(n+\alpha)\alpha}{2},
\end{equation}
%\begin{equation}\label{sigma-2}
%\hat{\sigma}_n(t;n+\alpha)=-t\frac{d}{dt}\ln\left(e^{-t/4}t^{\frac{(n+\alpha)\alpha}{2}}D_n(t/4)\right),
%\end{equation}
where $\hat{\sigma}_n(t;v)$ satisfies the
$\sigma$-form of the Painlev\'e III$'$ as specified in Corollary \ref{lem-last}, and furthermore
exhibits the asymptotic behavior
\begin{equation}\label{boundary condtion}
\hat{\sigma}_n(t;v)=\frac{t}{4}+\frac{n}{2}\sqrt{t}+\left(\frac{n^2}{4}-\frac{v^2}{2}\right)+O\Big ({ 1 \over \sqrt{t}} \Big ), \quad t\to\infty.
\end{equation}
\end{thm}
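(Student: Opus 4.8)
The plan is to prove the two assertions of the theorem separately: first the algebraic identity \eqref{sigma-2} relating the logarithmic derivative of the Hankel determinant $D_n(t)$ to the $\sigma$-function $\hat{\sigma}_n$, and then the asymptotic expansion \eqref{boundary condtion} as $t \to \infty$.

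For the identity \eqref{sigma-2}, the key observation is that the Lemma already expresses $D_n(t)$ as a Toeplitz determinant of $K$ Bessel functions, namely $D_n(t) = (-1)^{n(n-1)/2} 2^n t^{n(n+\alpha)/2} \det[K_{j-k+n+\alpha}(2\sqrt{t})]_{j,k=0}^{n-1}$. This matches the structure of the determinant appearing inside $\hat{\sigma}_n(t;v)$ in \eqref{sigma-1} once one specialises $\mathcal{L}_v$ to be a pure $K$ Bessel function. First I would set $a=0$ in \eqref{def: Bes-Lv}, so that $\mathcal{L}_{j-k+v}(t) = b\, e^{(j-k+v)\pi i} K_{j-k+v}(t)$, and choose $v = n+\alpha$. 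The phase factors $e^{(j-k+v)\pi i}$ contribute a determinant prefactor that I would evaluate as a power of $(-1)$: since $\det[e^{(j-k)\pi i} c_{j-k}] = \det[(-1)^{j-k} c_{j-k}]$ and the alternating signs factor out row-by-row and column-by-column, the net sign is a known power of $-1$ that should reconcile with the $(-1)^{n(n-1)/2}$ in the Lemma. Replacing $t$ by $4t$ inside $\hat{\sigma}_n$ turns the argument $\sqrt{t}$ of the Bessel functions into $2\sqrt{t}$, matching the Lemma exactly. After matching the determinants, I would carefully track the remaining scalar prefactors --- the powers of $t$ (the $t^{v^2/2}$ in \eqref{sigma-1}, the $t^{n(n+\alpha)/2}$ in the Lemma, and the Jacobian power $4^{\cdot}$ from $t \mapsto 4t$) and the exponential $e^{-t/4}$ evaluated at $4t$ giving $e^{-t}$ --- and collect their logarithmic $t$-derivatives. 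The term $t$ on the right of \eqref{sigma-2} should come from $-t\frac{d}{dt}(-t) = t$ applied to the exponential factor, while the constant $-(n+\alpha)\alpha/2$ should come from the mismatch between $t^{v^2/2} = t^{(n+\alpha)^2/2}$ and $t^{n(n+\alpha)/2}$, whose exponents differ by $(n+\alpha)\alpha/2$.

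For the asymptotic expansion \eqref{boundary condtion}, I would use the large-argument asymptotics of the $K$ Bessel function, $K_\nu(2\sqrt{t}) \sim \frac{1}{2}\sqrt{\pi} (t)^{-1/4} e^{-2\sqrt{t}}$ as $t \to \infty$, together with the full Hankel determinant expansion. The leading behaviour of $\det[K_{j-k+v}(2\sqrt{t})]$ factorises, and taking $-t\frac{d}{dt}\log$ of the resulting expression --- including the explicit $e^{-t/4}$ and $t^{v^2/2}$ factors in \eqref{sigma-1} --- should reproduce the $\frac{t}{4}$ term from the exponential, the $\frac{n}{2}\sqrt{t}$ term from the $n$ accumulated factors of $e^{-2\sqrt{t}}$ (rescaled to argument $\sqrt{t}$), and the constant $\frac{n^2}{4} - \frac{v^2}{2}$ from the combination of the algebraic prefactors $t^{-1/4}$ per row and the $t^{v^2/2}$ factor. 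Consistency with \eqref{eq: sigma-n-sigma-relation}, which pins down the constant term exactly, provides a useful cross-check.

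The main obstacle I anticipate is the careful bookkeeping of the sign and scalar prefactors in the first part: reconciling the $(-1)^{n(n-1)/2}$ from the column-reversal in the Lemma with the phase factors $e^{(j-k+v)\pi i}$ arising from the $b \neq 0$ to $b$ pure-$K$ specialisation of $\mathcal{L}_v$, while simultaneously tracking the several competing powers of $t$ and of $4$ under the substitution $t \mapsto 4t$. These are all elementary manipulations, but an error in any single exponent propagates into the constant $-(n+\alpha)\alpha/2$ and the $t$ term in \eqref{sigma-2}. A clean way to organise this is to compute $\hat{\sigma}_n(4t; n+\alpha)$ directly from the definition \eqref{sigma-1}, substitute the Lemma's formula for $D_n(t)$ inside, and verify that the difference $t\frac{d}{dt}\log D_n(t) + \hat{\sigma}_n(4t; n+\alpha)$ reduces to the stated affine function $t - (n+\alpha)\alpha/2$, with every $t$-dependent term from the Toeplitz determinants cancelling identically. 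The assertion that $\hat{\sigma}_n$ solves the $\sigma$-Painlev\'e III$'$ equation then requires no new work, as it is exactly the content of Corollary \ref{lem-last}.
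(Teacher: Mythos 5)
Your treatment of the identity \eqref{sigma-2} is correct and coincides with the paper's own argument: specialise $a=0$, $v=n+\alpha$ in \eqref{def: Bes-Lv}, pull the constant phases $e^{(j-k+v)\pi i}$ and the sign $(-1)^{n(n-1)/2}$ out of the determinants (they are $t$-independent and so drop out of the logarithmic derivative), note that $e^{-t/4}\big|_{t\mapsto 4t}=e^{-t}$ produces the term $t$, and that the mismatch between $t^{(n+\alpha)^2/2}$ and $t^{n(n+\alpha)/2}$ produces the constant $-(n+\alpha)\alpha/2$. No issue there, and likewise the appeal to Corollary \ref{lem-last} for the differential equation requires no new work.

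The asymptotic expansion \eqref{boundary condtion} is where your proposal has a genuine gap. The large-argument asymptotics $K_\nu(\sqrt{t})\sim\sqrt{\pi/2}\,t^{-1/4}e^{-\sqrt{t}}$ is independent of $\nu$ at leading order, so substituting it entrywise into $\det\bigl[K_{j-k+v}(\sqrt{t})\bigr]_{j,k=0}^{n-1}$ produces a matrix that is rank one to leading order: the determinant does \emph{not} ``factorise'', and for $n\geq 2$ the naive leading term cancels identically. Concretely, entrywise substitution predicts $\det \sim c\,t^{-n/4}e^{-n\sqrt{t}}$, whereas the true behaviour is $c\,t^{-n^2/4}e^{-n\sqrt{t}}$; the missing factor $t^{-n(n-1)/4}$ is generated entirely by these cancellations, and without it you would obtain the constant $\frac{n}{4}-\frac{v^2}{2}$ in \eqref{boundary condtion} rather than the correct $\frac{n^2}{4}-\frac{v^2}{2}$. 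Repairing the determinant route requires expanding each entry to $n$ terms of its asymptotic series (the coefficients being polynomials in $\nu^2$) and showing the leading surviving contribution is a nonzero Vandermonde-type coefficient determinant --- a much longer computation than your sketch, with error terms to control. The paper avoids all of this: it returns to the multiple-integral (Hankel) representation, writing $D_n(t)=\frac{t^{n(n+\alpha)/2}}{n!}\int_0^\infty\cdots\int_0^\infty\prod_{j<k}(\lambda_j-\lambda_k)^2\prod_l\lambda_l^\alpha e^{-\sqrt{t}(\lambda_l+1/\lambda_l)}\,d\lambda_l$, and applies Laplace's method about the minimum $\lambda_l=1$ of $\lambda+1/\lambda$. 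After the rescaling $x_k=t^{1/4}(\lambda_k-1)$, the squared Vandermonde automatically supplies the factor $t^{-n(n-1)/4}$ and the measure the factor $t^{-n/4}$, giving $D_n(t)=c\,e^{-2n\sqrt{t}}\,t^{n(n+\alpha)/2-n^2/4}\bigl(1+O(t^{-1/2})\bigr)$ with no cancellation subtleties; \eqref{boundary condtion} then follows by substitution into \eqref{sigma-2}. You should either adopt that integral-representation argument or carry out the full higher-order determinant expansion; as written, the asymptotic half of your proof does not go through.
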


\begin{proof}
  Choose $v=n+\alpha$ in \eqref{sigma-1}. Then we have
  \begin{equation}\label{Mu}
      \hat{\sigma}_n(t;n+\alpha)=-t\frac{d}{dt}\ln\left(e^{-t/4}t^{\frac{(n+\alpha)^2}{2}}\det(\mathcal{L}_{j-k+n+\alpha}(\sqrt{t}))_{j,k=0}^{n-1}\right).
  \end{equation}
  Now in \eqref{def: Bes-Lv} set $a=0$. 
  Simple manipulation of the determinant shows
$$
\det(\mathcal{L}_{j-k+n+\alpha}(2\sqrt{t}))_{j,k=0}^{n-1} = b^n e^{(n+\alpha) n \pi i }
\det ( K_{j-k+n+\alpha}(2\sqrt{t}) )_{j,k=0}^{n-1}.
$$
Using the above two formulas, the right hand side of (\ref{sigma-2}) can be simplified to read
$$
t\frac{d}{dt} \ln\left(t^{\frac{n(n+\alpha)}{2}} \det\biggl[ K_{j-k+n+\alpha}(2\sqrt{t}) \biggr]_{j,k=0}^{n-1} \right ).
$$
Substituting \eqref{Hankel- D-Bes} in the left hand side of \eqref{sigma-2}, and simplifying by noting the factors independent of $t$ do
not contribute  to the logarithmic derivative, precisely the same expression results.

To obtain the asymptotic behavior, we use the integral representation for the Hankel determinant
 \begin{equation*}\label{3.23}
     D_n(t)={t^{\frac{n(n+\alpha)}{2}} \over n!}\int_0^{\infty}...\int_0^{\infty}\prod_{1\leqslant j<k \leqslant n}(\lambda_j-\lambda_k)^2\prod_{l=1}^n\lambda_l^{\alpha}e^{-\sqrt{t}(\lambda_l+\frac{1}{\lambda_l})}d\lambda_l.
  \end{equation*}
 Note that the function $h(\lambda)=\lambda+\frac{1}{\lambda}$, $\lambda>0$ in the exponent has a single minimum  at $\lambda=1$ with
 $h(1)=2$, $h'(1)=0$ and $h''(1)=2$.
 Hence, as $t \to \infty$, the major contribution to the asymptotics of the integral arises from the neighbourhood of $\lambda_k = 1$, $k=1,2,\dots,n$.
Thus we Taylor expand each  $h(\lambda_k)$ in the exponent to second order and so replace each by $2 + (\lambda_k - 1)^2$. Now changing variables by setting
$x_k =  t^{\frac{1}{4}} (\lambda_k - 1)$ gives the asymptotics as $t \to \infty$
  \begin{equation*}\label{Hankel-asy}
     D_n(t)=ce^{-2n\sqrt{t}}t^{\frac{n(n+\alpha)}{2}-\frac{n^2}{4}}
     \Big (1+O\Big ( { 1 \over \sqrt{t}}  \Big ) \Big ),
  \end{equation*}
  for some constant $c$. Substituting this into \eqref{sigma-2} gives \eqref{boundary condtion}.
\end{proof}

\begin{cor}\label{Cf} Set
\begin{equation}\label{ys}
  y_n(t)=t\frac{d}{dt}\ln D_{n}(t)=-\hat{\sigma}_n(4t)+t-\frac{(n+\alpha)\alpha}{2}.
\end{equation}
We have that $y_n(t)$ satisfies the differential equation
\eqref{H-equation}.
	\end{cor}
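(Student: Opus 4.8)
The plan is to deduce the claim from the chain of substitutions already assembled in Remark~\ref{R2}, together with the fact (Corollary~\ref{lem-last}, specialised as in Theorem~\ref{main-thm}) that $\hat\sigma_n$ solves the $\sigma$-form of Painlev\'e III$'$. Theorem~\ref{main-thm} selects $v = n+\alpha$, so by \eqref{2.17} the function $\hat\sigma_n(\,\cdot\,;n+\alpha)$ satisfies \eqref{eq: sigma-equation-1} with parameters $(v_1,v_2)=(v+n,-v+n)=(2n+\alpha,-\alpha)$, which are exactly the values \eqref{hv} singled out in Remark~\ref{R2}. The entire argument then reduces to verifying that the affine relation \eqref{ys} between $y_n$ and $\hat\sigma_n(4t)$ is precisely the composite of the two substitutions recorded there.

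I would first run the substitution \eqref{hH1} in reverse: identifying $\sigma_{III'}$ with $\hat\sigma_n$ and replacing $t$ by $4t$ gives
\[
h(t) = -\hat\sigma_n(4t) + \frac{t}{2} + \frac{v_1 v_2}{4},
\]
and by Remark~\ref{R2} this $h$ satisfies \eqref{eq:h}. With $(v_1,v_2)=(2n+\alpha,-\alpha)$ one has $v_1 v_2/4 = -\alpha(2n+\alpha)/4$. Applying the second relation of Remark~\ref{R2}, namely $y_n = h + t/2 - \alpha^2/4$, which is exactly what turns \eqref{eq:h} into \eqref{H-equation}, and simplifying the resulting constant via $\alpha(2n+\alpha)+\alpha^2 = 2\alpha(n+\alpha)$, yields $y_n(t) = -\hat\sigma_n(4t) + t - \alpha(n+\alpha)/2$. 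This is precisely \eqref{ys}, so the $y_n$ of \eqref{ys} coincides with the $y_n$ of Remark~\ref{R2} and therefore satisfies \eqref{H-equation}.

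Equivalently, one may verify the claim by a single direct substitution, bypassing the intermediate Hamiltonian $h$. Inverting \eqref{ys} gives $\hat\sigma_n(4t) = t - \alpha(n+\alpha)/2 - y_n(t)$; setting $s = 4t$, so that $\hat\sigma_n(s)$ solves \eqref{eq: sigma-equation-1} in the variable $s$, and using the chain rule one finds $\hat\sigma_n'(s) = \tfrac14(1-y_n')$ and $\hat\sigma_n''(s) = -\tfrac{1}{16}y_n''$. Consequently $4\hat\sigma_n'(s)-1 = -y_n'$, $\hat\sigma_n(s)-s\hat\sigma_n'(s) = t y_n' - y_n - \alpha(n+\alpha)/2$, and $(s\hat\sigma_n'')^2 = \tfrac{1}{16}(t y_n'')^2$. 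After clearing the overall factor of $16$, one collects the constant, linear, and quadratic powers of $y_n'$ and matches them against the expansion of the right-hand side of \eqref{H-equation}.

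The main obstacle in either route is the bookkeeping of the affine shift. The constant $-\alpha(n+\alpha)/2$ in \eqref{ys} does not sit inside the factor $\bigl(n(n+\alpha)+t y_n'-y_n\bigr)$ of \eqref{H-equation} in the obvious way; one must write $t y_n' - y_n - \alpha(n+\alpha)/2 = \bigl(n(n+\alpha)+t y_n'-y_n\bigr) - (n+\alpha)(2n+\alpha)/2$ and track the cross-terms produced when the spurious constant $(n+\alpha)(2n+\alpha)/2$ multiplies $4y_n'(1-y_n')$. These cross-terms are exactly what reassemble the coefficients into the squared form $(n-(2n+\alpha)y_n')^2$: the constant part collapses through $(n+\alpha)^2-\alpha(2n+\alpha)=n^2$, the linear part through $2\alpha(2n+\alpha)-2(n+\alpha)(2n+\alpha)=-2n(2n+\alpha)$, and the quadratic part through $-\alpha(2n+\alpha)+2(n+\alpha)(2n+\alpha)=(2n+\alpha)^2$. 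Ensuring that all three of these cancellations land simultaneously is the only delicate point; taking the first route makes them automatic via Remark~\ref{R2} rather than something to be rechecked by hand.
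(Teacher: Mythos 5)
Your proposal is correct and takes essentially the same approach as the paper: the paper's proof likewise starts from the $\sigma$-form PIII$'$ equation satisfied by $\hat\sigma_n(\,\cdot\,;n+\alpha)$ with parameters $(2n+\alpha,-\alpha)$ (Theorem~\ref{main-thm} via Corollary~\ref{lem-last}) and observes that the scaling $t\mapsto 4t$ together with the affine shift in \eqref{ys} converts that equation into \eqref{H-equation}. Your two routes---through the substitution chain of Remark~\ref{R2}, and by direct substitution with the three coefficient cancellations checked explicitly---simply supply the bookkeeping that the paper's two-line proof leaves implicit.
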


\begin{proof} We know from Theorem \ref{main-thm} the particular
	differential equation satisfied by $\hat{\sigma}_n(t)$.
	Making the scaling $t \mapsto 4t$ and the linear shift
	as required by the final expression in (\ref{ys}) gives
	\eqref{H-equation}.
		\end{proof}
	
	Since from (\ref{MD}), $M_n(t)$ is proportional
	to $D_n(t)$, we can substitute $M_n(t)$ for
	$D_n(t)$ in Corollary \ref{Cf}. Doing this we reclaim Theorem \ref{T1}, now derived entirely within the
	setting of the Okamoto $\tau$-function theory for
	Painlev\'e III$'$.
	
	\subsection{Relationship to a discrete Painlev\'e equation}
	Discrete Painlev\'e equations are a class of nonlinear difference equations associated with translations on
	crystallographic lattices \cite{Jo19}. B\"acklund transformations have interpretation as such translations,
	and indeed it is well known that $\tau$-function sequences
	constructed in this way satisfy discrete Painlev\'e
	type recurrences.
	In particular, from \cite{FW03} we know
	that the $\tau$-function sequence corresponding to the
	$I$-Bessel determinant in (\ref{eq: hard edge-limit})
	satisfies recurrences giving rise to the
	alternate discrete Painlev\'e II
	equation \cite{RGMM}. In fact the working leading to this
	result proceeds entirely from the algebraic aspects of
	the Okamoto $\tau$-function theory for
	Painlev\'e III$'$, and so holds equally well for $I$
	replaced by the general linear combination
	(\ref{def: Bes-Lv}).
	
	The specific $\tau$-function sequence to be considered is
	\begin{equation}\label{2.x}
	\hat{\tau}[n](t) = \det \Big [ \mathcal{L}_{j-k+v}(\sqrt{t})
	\Big ]_{j,k=0}^{n-1},
	\end{equation}
    where $\mathcal{L}_v(t)$ is a linear combination of the modified Bessel functions given in \eqref{def: Bes-Lv}.
    We know that $\{ \bar \tau[n]\}_{n=0,1,\dots}$ is a $\tau$-function sequence for PIII$'$ with $t$ replaced
    by $t/4$ and parameters (\ref{2.17}) in the Hamiltonian (\ref{def: H}), i.e.~taking the logarithmic
    derivative of $\bar \tau[n]$ gives $H$ as specified.
	Noting
	(\ref{eq: toda equation-solution}), Proposition \ref{P2}
	and (\ref{def: tau-bar}), we see that $t {d \over dt}
	\log \hat \tau[n](t)$ is equal to
	$t {d \over dt}
	\log \bar \tau[n](t)$ plus a constant, and hence
$\{ \hat{\tau}[n]\}_{n=0,1,\dots}$ is also a
$\tau$-function sequence for this same $H$,
but now shifted by a constant on the RHS of
(\ref{def: H}). Specializing to $a=1$, $b=0$ in
(\ref{def: Bes-Lv}), the B\"acklund transformations
	for this sequence were shown in \cite{FW03}
	to imply a coupled recurrence scheme determining
	each $\hat{\tau}[n]$. As already
	commented, the working makes use only of algebraic properties of the transformations,
	and so applies equally to the case of the general linear
	combination in (\ref{def: Bes-Lv}); the only change required
	to the result of \cite{FW03}	is to adjust the initial conditions.
		
		\begin{thm} (Forrester and Witte \cite[Prop.~2]{FW03}.)
	Specify $\hat{\tau}[n]$ as in (\ref{2.x}). Let
	$p_n, q_n$ denote the $p,q$ variables in the Hamiltonian
	(\ref{def: H}) with parameters (\ref{2.17}).
	The sequences $\{ \hat{\tau}[n] \}_{n=0,1,\dots}$,
	$\{p_n\}_{n=0,1,\dots}$, $\{q_n\}_{n=0,1,\dots}$
	satisfy the coupled recurrences
	\begin{align}
	{  \hat{\tau}[n+1]  \hat{\tau}[n - 1] \over
	( \hat{\tau}[n])^2 } \biggr|_{t \mapsto 4t} & = p_n \qquad (n=1,2,\dots) \\
p_{n+1} & = {q_n^2 \over t} (p_n - 1) - {v q_n \over t} + 1
\quad (n=0,1,\dots) \label{pn+1,pn,qn} \\
q_{n+1} & = - {t \over q_n} + {(1+n) t \over q_n ( q_n (p_n - 1) - v) + t } \quad (n=0,1,\dots)
\end{align}
subject to the initial conditions
\begin{align*}
p_0 = 0, \qquad & q_0 = t {d \over dt} \log t^{-v/2}
\mathcal L_v(2\sqrt{t}) , \\
 \hat{\tau}[0] =1, \qquad &  \hat{\tau}[1]|_{t \mapsto 4t} = {\mathcal L}_v(2\sqrt{t}).
\end{align*}
\end{thm}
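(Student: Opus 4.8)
The plan is to follow the algebraic derivation of Forrester and Witte \cite{FW03}, whose dependence on the seed enters only through the initial data, and to substitute the initial data appropriate to the general combination \eqref{def: Bes-Lv}. First I would justify the assertion made in the paragraph preceding the statement that $\{\hat\tau[n]\}$ is a $\tau$-function sequence for the Hamiltonian \eqref{def: H} with parameters \eqref{2.17}: combining \eqref{eq: toda equation-solution}, Proposition \ref{P2} and \eqref{def: tau-bar} shows $\hat\tau[n](t)=4^{n(n-1)/2}t^{n(v-n)/2}\bar\tau[n](t)$, so $t\frac{d}{dt}\log\hat\tau[n]$ and $t\frac{d}{dt}\log\bar\tau[n]$ differ only by an ($n$-dependent) constant. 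Since adding a constant to the right-hand side of \eqref{def: H} leaves the Hamilton equations \eqref{Hamiltonian equation} unchanged, the canonical pair $(p_n,q_n)$ is unaffected, and $p_n,q_n$ may be taken to be the images of the seed pair under the iterate $T_1^n$, which advances $(v_1,v_2)$ from the seed value $(v,-v)$ to $(v+n,-v+n)$.

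The core computation is the action of $T_1=s_0s_2s_1s_2$ on $(p,q)$, read from Table \ref{piii-table} with the operators applied in the order $s_0,s_2,s_1,s_2$ (the order for which $T_1(v_1,v_2)=(v_1+1,v_2+1)$). On the $p$-coordinate the trailing $s_2s_1s_2$ acts as $p\mapsto1-p\mapsto1-p\mapsto p$, i.e.\ as the identity, the two flips $t\mapsto-t$ cancelling; hence $p_{n+1}=s_0(p_n)$, and since $\tfrac12(v_1-v_2)=v$ is constant along the sequence the $s_0$ entry collapses to $p_{n+1}=\frac{q_n^2}{t}(p_n-1)-\frac{vq_n}{t}+1$, which is \eqref{pn+1,pn,qn}. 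For $q$ one must compose all four substitutions, starting from $s_0(q)=-t/q_n$ and tracking the sign flips of $s_2$ together with the $s_1$ shift $q\mapsto q+\frac{v_2-v_1}{2(p-1)}$; at the $s_1$ stage the intermediate parameters satisfy $v_2-v_1=2(n+1)$, which supplies the factor $1+n$, and a short simplification using $q_n(q_n(p_n-1)-v)+t=t\,p_{n+1}$ yields the displayed recurrence for $q_{n+1}$. This $q$-composition is the step I expect to be the \emph{main obstacle}: the intermediate expressions are considerably bulkier than the final two-term answer, and the sign and parameter bookkeeping must be done carefully. Since every substitution in Table \ref{piii-table} is purely algebraic and holds identically in the constants $a,b$ of \eqref{def: Bes-Lv}, this reproduces verbatim the computation of \cite[Prop.~2]{FW03}.

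Next I would establish $\hat\tau[n+1]\hat\tau[n-1]/\hat\tau[n]^2\big|_{t\mapsto4t}=p_n$. Using the factorisation $\hat\tau[n]=4^{n(n-1)/2}t^{n(v-n)/2}\bar\tau[n]$ from the first paragraph, the second difference of the exponents gives $\hat\tau[n+1]\hat\tau[n-1]/\hat\tau[n]^2=(4/t)\,\bar\tau[n+1]\bar\tau[n-1]/\bar\tau[n]^2$, and the Toda lattice equation \eqref{eq: toda equation} replaces the $\bar\tau$-quotient by $\delta^2\log\bar\tau[n]$. By \eqref{def: tau-bar} and $H[n]=\frac{d}{dt}\log\tau[n]$ this equals $[\delta_s(sH)]_{s=t/4}$, where $\delta_s=s\frac{d}{ds}$. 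The crux is the Hamiltonian identity $\frac{d}{ds}(sH)=p$, which follows from $\frac{dH}{dt}=\frac{\partial H}{\partial t}$ along the flow \eqref{Hamiltonian equation} together with the fact that the only explicit $t$ in the expression $tH$ of \eqref{def: H} is the linear term $+tp$. Hence $\delta_s(sH)=s\,p$, and assembling the factors gives $\hat\tau[n+1]\hat\tau[n-1]/\hat\tau[n]^2=p(t/4)$, so that the substitution $t\mapsto4t$ produces exactly $p_n(t)$.

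Finally I would verify the initial data. At the seed $n=0$ the parameters are $v_1=-v_2=v$, and as in the proof of the Proposition one may take $H=p=0$, giving $p_0=0$; the surviving Hamilton equation is then the Riccati equation $t q'=-(q^2+vq-t)$, whose linearisation is the seed Bessel equation of the Proposition, so its relevant solution is $q_0=t\frac{d}{dt}\log t^{-v/2}\mathcal L_v(2\sqrt t)$. The determinant \eqref{2.x} is empty at $n=0$, whence $\hat\tau[0]=1$, while at $n=1$ it is the single entry $\mathcal L_v(\sqrt t)$, so that $\hat\tau[1]|_{t\mapsto4t}=\mathcal L_v(2\sqrt t)$. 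With these initial conditions the recurrences established above determine the entire sequence, which completes the proof.
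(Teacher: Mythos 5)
Your proposal is correct and takes essentially the same route as the paper, which simply defers to Forrester--Witte \cite{FW03} after observing that the B\"acklund-transformation working is purely algebraic and only the initial conditions need adjusting for the general combination \eqref{def: Bes-Lv}; your left-to-right composition of $T_1=s_0s_2s_1s_2$, the factor $v_2-v_1=2(n+1)$ at the $s_1$ stage, the prefactor identity $\hat\tau[n]=4^{n(n-1)/2}t^{n(v-n)/2}\bar\tau[n]$ giving the $4/t$ factor, the Hamiltonian identity $\frac{d}{dt}(tH)=p$, and the Riccati equation for $q_0$ all check out. The only difference is that you carry out in detail the computations the paper leaves to the citation.
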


It is noted in \cite[Proof of Prop.~4.6]{FW-2002} that combining
the recurrences for $\{p_n\}$, $\{q_n\}$ gives
\begin{equation}\label{pqA}
q_{n+1} + {t \over q_n} = {1+n \over p_{n+1}}.
\end{equation}
It is further noted that in addition to the forward equations
relating $p_{n+1}$ and $q_{n+1}$ to $p_n, q_n$, there are also
backward equations, relating $p_{n-1}$ and $q_{n-1}$ to
$p_n, q_n$. For $q_{n-1}$, this reads
$$
q_{n-1} = {t \over  \displaystyle {n \over p_n} - q_n }.
$$
Using the above two formulas to eliminate $p_{n+1}$ and $p_n$ in \eqref{pn+1,pn,qn} then gives
\cite[Prop.~4.6]{FW-2002}
\begin{equation}
{1 + n \over q_n q_{n+1} + t} +
{n \over q_n q_{n-1} + t} = {1 \over q_n} - {q_n \over t} +
{n - v \over t}, \qquad (n=0,1,\dots).
\end{equation}
It is this nonlinear difference equation which is of the type
referred to as the alternate discrete Painlev\'e II
equation \cite{RGMM}.

	\section{Large $n$ asymptotics of $M_n(t)$}\label{S3}
	We know from (\ref{def: generating function})
and (\ref{MT}) that $M_n(t)$, specified	as a ratio of Hankel
determinants in (\ref{MD}), has the interpretation as the exponential moment
generating function for the linear statistic (\ref{L}).
Specifically, denoting the moments by $\{ m_p \}_{p=1}^\infty$
we have
$$
M_n(t) = 1 + \sum_{p=1}^\infty {(-1)^p t^p \over p!} m_p.
$$

From a statistical viewpoint, the corresponding cumulants
$\{ \kappa_p \}_{p=1}^\infty$ can often be of more direct
relevance. In general, while $\kappa_1 = m_1$, one has
$\kappa_2 = m_2 - m_1^2$ which is the variance,
$\kappa_3 = m_3 - 3 m_2 m_1 + 2 m_1^3$ which together
with the variance is used to define the skewness $\gamma$
according to $\gamma = \kappa_3/\kappa_2^{3/2}$ etc. For
a Gaussian distribution, $\kappa_p = 0$ for $p > 2$.
The exponential generating function of the cumulants is obtained from $M_n(t)$
according to
\begin{equation}\label{kM}
\log M_n(t) = \sum_{p=1}^\infty {(-1)^p t^p \over p!} \kappa_p.
\end{equation}
Note that it is $\log M_n(t)$ which is directly characterized in
Theorem \ref{T1}. We can readily compute
\begin{equation}\label{km1}
	\kappa_1 = {n \over \alpha}, \qquad
	\kappa_2 = {n^2 + n \alpha \over  \alpha^2 ( \alpha^2 - 1)},
	\end{equation}
	by seeking a power series solution of the differential
	equation in Theorem \ref{T1} about the origin.
	Note that the first of these requires $\alpha > 0$ to be positive and thus well defined, while the second requires $\alpha > 1$. This is in keeping with the power term in the weight
	(\ref{w0}) being $x^\alpha$, while the singularity at the origin induced by forming the $k$-th moment of (\ref{L})
	is to leading order proportional to $x^{-k}$; for the product to be integrable
	requires $\alpha > k - 1$.
Our interest in this section is in the scaled large $n$ form of
$\{\kappa_p\}$, both for fixed $\alpha$, and the choice $\alpha =
n$ as is  relevant to the Wigner time delay problem.

For fixed $\alpha$, as is consistent with (\ref{km1}),
$\kappa_p/n^p$, $p=1,2,\dots$ tends to a well defined limit.
This can be read off from an earlier asymptotic result
for $M_n(t)$
of Xu et
al.~\cite[Th.~1]{XDZ}, giving the large $n$ form uniformly valid
for any $0< t \le d$ with $d$ fixed, which we state with $t$ replaced by $t/n$ and the limit $n \to \infty$ taken.

\begin{thm}\label{T8}(Xu et al.~\cite{XDZ}.) Define $r(s)$ as the solution analytic
	at the origin of the third order nonlinear equation
	\begin{equation}\label{rs1}
	2 s^2 r' r''' - s^2 (r'')^2 + 2 s r' r'' -
	4 s (r')^3 + \Big ( 2r - {1 \over 4} \Big ) (r')^2 + 1 = 0,
	\end{equation}
	subject to the initial condition $r(0) = {1 \over 8} (1 -
	4 \alpha^2)$. One has
	\begin{equation}\label{rs2}
	\lim_{n \to \infty} M_n \Big ( {t \over n} \Big ) =
	\exp \int_0^t {1 - 4 \alpha^2 - 8 r(2\xi) \over 16 \xi}
\, d \xi.
\end{equation}
	\end{thm}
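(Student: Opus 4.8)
The plan is to derive the asymptotics directly from the exact characterization in Theorem~\ref{T1}, performing the hard-edge-type scaling $t=\tau/n$ and extracting the leading-order limiting equation. First I would set $Y_n(\tau):=y_n(\tau/n)=\tau\frac{d}{d\tau}\log M_n(\tau/n)$, the second identity following from the chain rule. The working ansatz, to be justified, is that $Y_n$ together with $\dot Y_n=dY_n/d\tau$ and $\ddot Y_n$ stay $O(1)$ and converge as $n\to\infty$; this is consistent with \eqref{km1}, since $\kappa_p/n^p$ tends to a limit and hence $\log M_n(\tau/n)=\sum_p\frac{(-1)^p\tau^p}{p!}(\kappa_p/n^p)$ has a finite limit with finite $\tau$-derivatives. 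Substituting $y_n'(\tau/n)=n\dot Y_n$ and $y_n''(\tau/n)=n^2\ddot Y_n$ into \eqref{H-equation}, so that $(t y_n'')^2=\tau^2 n^2\ddot Y_n^2$, and expanding the right-hand side in powers of $n$, I would verify that the $O(n^4)$ and $O(n^3)$ contributions cancel identically. At $O(n^2)$ one is then left, in the limit $Y_n\to Y$, with the second-order equation
\[
\tau^2\ddot Y^2=1+2\alpha\dot Y+\alpha^2\dot Y^2+4Y\dot Y^2-4\tau\dot Y^3.
\]

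Next I would pass to the variables of the statement. Writing $s=2\tau$ and $r(s):=\frac{1-4\alpha^2}{8}-2Y(s/2)$, so that $\dot Y=-r'$, $\ddot Y=-2r''$ and $\tau^2=s^2/4$, the displayed equation becomes
\[
s^2(r'')^2=1-2\alpha r'+\frac{1}{4}(r')^2-2r(r')^2+2s(r')^3,
\]
the $\alpha^2$ terms cancelling against the constant shift built into $r$. This relation is quadratic in the top derivative $r''$; differentiating it once in $s$ produces a common factor $r''$, and removing that factor yields an equation linear in $r'''$. A direct check shows this is precisely $r'$ times the reduced form $2s^2r'''+2sr''+2\alpha-\frac{1}{2}r'+4rr'-6s(r')^2=0$, and that, after using the second-order relation to eliminate $(r'')^2$, it coincides with \eqref{rs1}. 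Thus the third-order polynomial equation in the statement is nothing but the derivative of the scaling-limit equation, recast so as to clear the square on the highest derivative and so admit a clean power-series solution.

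For the initial data and the selection of the solution I would argue as follows. Since $M_n(0)=1$, the function $y_n(t)$ vanishes at $t=0$, hence $Y(0)=0$, which through $r=\frac{1-4\alpha^2}{8}-2Y$ gives exactly $r(0)=\frac{1}{8}(1-4\alpha^2)$. The coefficient $2s^2r'''$ in \eqref{rs1} makes $s=0$ a singular point, and requiring $r$ to be analytic there, together with the value $r(0)$, determines the Taylor coefficients recursively; the first two are fixed by the cumulant limits $\kappa_1/n\to 1/\alpha$ and $\kappa_2/n^2\to 1/(\alpha^2(\alpha^2-1))$ read off from \eqref{km1}. Finally, integrating $Y(\xi)/\xi$ and using $Y(\xi)=\frac{1-4\alpha^2-8r(2\xi)}{16}$ reproduces \eqref{rs2}.

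The main obstacle is that the computation above is, as stated, only formal: it identifies the limiting equation and the admissible initial data, but does not by itself prove that the limit $Y=\lim_{n\to\infty}y_n(\tau/n)$ exists, is attained uniformly for $\tau\in(0,d]$ with $d$ fixed, and genuinely solves the leading-order equation rather than merely satisfying it to leading order in the expansion. Making this rigorous requires uniform control of $M_n(\tau/n)$ itself, and this is exactly what the Riemann--Hilbert steepest-descent analysis of the orthogonal polynomials for the weight \eqref{w0} supplies in \cite{XDZ}, yielding the uniform large-$n$ expansion from which \eqref{rs2} is extracted. I would therefore treat the ODE-scaling argument as the mechanism that pins down the limiting transcendent and its data, with the Riemann--Hilbert analysis providing the analytic backbone that certifies the limit.
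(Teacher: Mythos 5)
Your proposal is correct in its computations, but you should be aware that the paper itself does not prove Theorem \ref{T8}: it is quoted from Xu et al.~\cite{XDZ}, where it is established by a Riemann--Hilbert steepest-descent analysis of the orthogonal polynomials for the weight \eqref{w0}. What the paper does supply, immediately after the theorem, is exactly the formal argument you give: the scaling $t\mapsto t/n$ in \eqref{H-equation}, the cancellation of the $O(n^4)$ and $O(n^3)$ terms, the limiting equation \eqref{Y} for $Y(t)=\lim_{n\to\infty}y_n(t/n)$, the representation \eqref{rs3}, and the substitution $r(2s)=-2Y(s)+(1-4\alpha^2)/8$ identifying \eqref{Y} with the second-order equation \eqref{rs4} --- all presented explicitly as an ``alternative characterization'' consistent with \eqref{rs2}, not as a proof, precisely because the existence and analyticity of the limit is taken from \cite{XDZ}. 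So your plan and the paper's discussion coincide almost step for step, including the honest placement of the analytic burden on the Riemann--Hilbert analysis.

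The one place you go beyond the paper is the link between the second-order equation \eqref{rs4} and the third-order equation \eqref{rs1} of the statement: you differentiate \eqref{rs4}, factor out $r''$, and check that \eqref{rs1} equals \eqref{rs4} plus $r'$ times the reduced derivative. The paper does not carry out this verification; it simply cites \cite[Eq.~(1.30)]{XDZ2015} and \cite[Eq.~(2.12)]{XDZ} for the fact that $r$ satisfies both equations. Your derivation is a worthwhile supplement (modulo the caveat that dividing by $r''$ requires $r''\not\equiv 0$ on the interval considered), but note two residual gaps that neither you nor the paper close without \cite{XDZ}: (i) the interchange of the $n\to\infty$ limit with differentiation and with the integral $\int_0^t Y(\xi)\,\xi^{-1}d\xi$ requires the uniform convergence of $M_n(t/n)$ on $(0,d]$, and (ii) the uniqueness of the solution of \eqref{rs1} analytic at the origin with the prescribed $r(0)$ needs to be checked (the paper makes the analogous uniqueness claim for \eqref{Y}), since otherwise \eqref{rs2} would not pin down a well-defined function. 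Treated as a reconstruction of the paper's reasoning rather than a self-contained proof, your proposal is sound.
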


A characterization of this limit can also be obtained directly
from Theorem \ref{T1}. First note that according to
(\ref{kM}), the quantity $y_n(t)$ in Theorem \ref{T1} has the power series
expansion
\begin{equation}\label{3.3}
y_n(t) = \sum_{p=1}^\infty {(-1)^p  t^p \over (p-1)!}  \kappa_p.
\end{equation}
We know from \cite{XDZ} that for $n \to \infty$,
$y_n(t/n)$ tends to a well defined
limit, $Y(t)$ say, analytic about the origin.
From the viewpoint of the consistency of this power series with
(\ref{3.3}), we must have that $Y(t)$ satisfies the differential
equation which results from
\eqref{H-equation} upon the change of variable $t \mapsto t/n$,
and then equating terms of leading order in $n$ (these occur at order $n^2$). This equation reads
\begin{equation}\label{Y}
(tY'')^2 = 1 + \alpha^2 (Y')^2 + 2 \alpha Y' - 4 (t Y' - Y)
(Y')^2,
\end{equation}
which can be checked to admit a unique power series solution,
and so as an alternative characterization to (\ref{rs2}) we have
\begin{equation}\label{rs3}
\lim_{n \to \infty} M_n \Big ( {t \over n} \Big ) =
\exp \int_0^t {Y (\xi) \over  \xi}
\, d \xi.
\end{equation}
In fact it is already known from Xu et al.~\cite[Eq.~(1.30)]{XDZ2015} and \cite[Eq.~(2.12)]{XDZ} that
the
function $r(s)$ as defined in Theorem \ref{T8} also satisfies
\begin{equation}\label{rs4}
s^2 (r'')^2 - 2s (r')^3 + {8 r - 1 \over 4} (r')^2 + 2 \alpha r'
-1 = 0.
\end{equation}
Replacing $s$ by $2s$, then substituting $r(2s) = - 2 Y(s) +
(1 - 4 \alpha^2)/8$, reduces (\ref{rs4}) to (\ref{Y}).

In addition to the characterisation of the exponential generating
function for the singular statistic (\ref{L}), in the case of the Laguerre unitary ensemble, through the $\sigma$-Painlev\'e
III$'$ equation (\ref{H-equation}), we have also revised in Theorem
\ref{T2} an analogous characterisation of the exponential
generating function for (\ref{L}), now in the
Jacobi unitary ensemble. The Jacobi weight has the same power
singularity $x^\alpha$ for $x \to 0^+$ as the Laguerre weight.
Expecting the linear statistic (\ref{L}) to probe this hard
edge region, we would therefore anticipate that in the limit
$n \to \infty$, upon appropriate scaling, the quantity
$D_n(t)$ in Theorem \ref{T2} tends to the right hand side of
(\ref{rs3}). Indeed scaling $t$ in (\ref{J1}) by $t \mapsto
t/n^2$ we reclaim (\ref{Y}). Such a limiting relation has also recently been identified by  Chen et al. \cite[Thm. 7.]{CCF2019}.

We now turn our attention to the case $\alpha = n$. By an
analysis of the large $n$ form of the recurrence satisfied
by $\{ \kappa_p\}$ as implied by the nonlinear equation
\eqref{H-equation}, Mezzadri and Simm \cite{MS} established
that $\{ n^{2p-2} \kappa_p |_{\alpha = n}\}$ has a well defined
limit, and moreover gave a specification in terms of a generating
function.

\begin{thm} (Mezzadri and Simm \cite{MS}.)
	Specify $F(t)$ as the power series solution of the first order
	nonlinear equation
	\begin{equation}\label{FF}
	2 F(t) + F'(t) - 4 t F'(t) - 6t (F'(t))^2 + 4 F(t) F'(t) = 1.
	\end{equation}
	One has
	\begin{equation}\label{FF1}
	\sum_{p=1}^\infty  \lim_{n \to \infty} n^{2p-2}
	\kappa_p \Big |_{\alpha = n} {t^p \over (p-1)!} = F(t).
	\end{equation}
\end{thm}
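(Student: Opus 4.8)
The plan is to obtain \eqref{FF} as the leading large-$n$ reduction of the $\sigma$-Painlev\'e III$'$ equation \eqref{H-equation} of Theorem \ref{T1}, specialised to $\alpha=n$. Writing $y_n(t)=t\frac{d}{dt}\log M_n(t)$ as in Theorem \ref{T1}, with the cumulant expansion \eqref{3.3}, the scaling in \eqref{FF1} suggests introducing
\[
F_n(t):=\frac{1}{n^2}\,y_n(-n^2 t)=\sum_{p=1}^\infty \frac{n^{2p-2}\kappa_p|_{\alpha=n}}{(p-1)!}\,t^p ,
\]
so that, coefficient by coefficient, $F_n\to F$ as $n\to\infty$, with $F$ the generating function on the left-hand side of \eqref{FF1}; the first two coefficients $c_1=1$, $c_2=2$ are read off from \eqref{km1}. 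The first task is to justify this limit: substituting \eqref{3.3} into \eqref{H-equation} yields a recurrence expressing $\kappa_p$ polynomially in $n,\alpha$ and the lower cumulants, and an induction on this recurrence with $\alpha=n$ establishes that $n^{2p-2}\kappa_p$ converges to a finite $c_p$, so that $F$ is a well-defined formal power series.

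Granting this, I would substitute $y_n(s)=n^2 F_n(-s/n^2)$ into \eqref{H-equation} with $\alpha=n$. A direct computation gives $s\,y_n''=-t F_n''$, $s\,y_n'=n^2 t F_n'$ and $y_n=n^2 F_n$, so the left-hand side $(s y_n'')^2=t^2 (F_n'')^2$ stays of order unity while the right-hand side is of order $n^2$. Dividing by $n^2$ and letting $n\to\infty$ (the left side then tends to zero) therefore removes the second-derivative term altogether and leaves the first-order reduced equation
\[
(1-F')^2+4F'(1+F')\,(F-tF')=0 .
\]

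The crux of the argument, and the step I expect to be the main obstacle, is that this reduced equation is badly degenerate: evaluated at $t=0$, $F(0)=0$, it forces the double root $F'(0)=1$, and on inserting a power series one finds that the coefficient of the highest power of $F'$ cancels order by order, so that the equation alone determines only the leading term $c_1$. The naive leading-order balance has lost the information that fixes $F$; equivalently, at the level of the cumulant recurrence the degeneracy appears as an under-determined leading order, the $c_p$ being pinned down only at the next order in $n$. I would resolve this by differentiating the reduced equation once in $t$, viewed as an identity of formal power series: every resulting term then carries a factor $F''$, and since $F''$ has non-vanishing constant term ($F''(0)=2c_2=4$) it is invertible in the ring of formal power series and may be cancelled. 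The surviving identity is exactly
\[
2F+F'-4tF'-6t(F')^2+4FF'-1=0 ,
\]
namely \eqref{FF}. Unlike the reduced equation, \eqref{FF} is non-degenerate and admits a unique power-series solution with $F(0)=0$, which must then coincide with the scaled cumulant generating series, giving \eqref{FF1}. I regard verifying that the differentiation reproduces precisely \eqref{FF}, rather than some spurious relation, as the one genuinely delicate computation.
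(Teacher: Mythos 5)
Your proposal is correct in every step that can be checked, but it runs the decisive step in the opposite direction to the paper, and in doing so does strictly more. The paper does not actually prove this theorem: the statement, including the existence of the limits $\lim_{n\to\infty} n^{2p-2}\kappa_p|_{\alpha=n}$, is cited from \cite{MS}, and the paper's own contribution is the consistency check that a solution of \eqref{FF} also satisfies the scaled limit \eqref{FF2} of \eqref{H-equation}, carried out by an ad hoc substitution (using \eqref{FF} to eliminate $t(F')^2$ in the cubic term, reducing matters to \eqref{FF3}, and recognizing what remains as ${2\over 3}F'$ times \eqref{FF}). You derive \eqref{FF2} by exactly the paper's scaling (your computations $s\,y_n''=-tF_n''$, etc.\ are right, and the order-$n^2$ balance does give \eqref{FF2}), but then argue in the converse direction, \eqref{FF2} $\Rightarrow$ \eqref{FF}. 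Your differentiation step is correct: one has the identity
\begin{equation*}
\frac{d}{dt}\Bigl[(1-F')^2 - 4F'(F'+1)(tF'-F)\Bigr]
= 2F''\Bigl[2F + F' - 4tF' - 6t(F')^2 + 4FF' - 1\Bigr],
\end{equation*}
so cancelling the nonzero series $F''$ (formal power series form an integral domain, and $F''(0)=2c_2=4$ by \eqref{km1}) yields precisely \eqref{FF}, whose power-series solution with $F(0)=0$ is indeed unique, since its coefficient recursion expresses each Taylor coefficient in terms of the lower ones. Combined with that uniqueness, this converts the paper's consistency check into a genuine derivation of the Mezzadri--Simm equation from Theorem \ref{T1}; as a bonus, your identity also recovers the paper's direction for free, since if \eqref{FF} holds then the left-hand side of \eqref{FF2} has vanishing $t$-derivative and vanishes at $t=0$, hence vanishes identically.

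Two remarks. The one genuine gap is the step you flag at the outset: the existence of the limits $c_p$, which you leave as a sketched induction on the cumulant recurrence implied by \eqref{H-equation}. This is exactly the part that both you and the paper owe to \cite{MS}; without it your argument characterizes the solution of \eqref{FF} but does not yet identify it with the left-hand side of \eqref{FF1}. Second, your description of the degeneracy of \eqref{FF2} is overstated: writing $F = t + f_2t^2 + \cdots$, the equation forces the double root $F'(0)=1$ and then the discrete branching $f_2 \in \{0,2\}$, after which the coefficients are determined; the branch $f_2=0$ is the spurious exact solution $F=t$, which is precisely the solution your $F''\ne 0$ cancellation excludes. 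This imprecision sits only in the motivation and does not affect the validity of the argument.
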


In keeping with our discussion below Theorem \ref{T8}, it should
be possible to identify $F(t)$ as being related to a limiting
solution of \eqref{H-equation}. To be consistent with the
established fact that $n^{2p - 2} \kappa_p |_{\alpha = n}$ as
a well defined large $n$ limit, and recalling too the
relation (\ref{3.3}) between $y_n(t)$ and $\{\kappa_p\}$, we
see that after setting $\alpha = n$ we should change variables $t \mapsto -n^2 t$, and
furthermore replace $y_n(-n^2 t)$ by $n^2 F(t)$. Doing this, and
equating the leading order term  shows that in addition to
(\ref{FF}), $F$
satisfies
\begin{equation}\label{FF2}
(1 - F')^2 - 4 F' (F' + 1) (t F' - F) = 0.
\end{equation}

The compatibility of (\ref{FF2}) and (\ref{FF}) can be
checked directly. We begin by considering $F$ to be specified
by (\ref{FF}). Then we can check by using this to
substitute for $t(F')^2$ in the factor $-4t (F')^3$ that (\ref{FF2}) is valid if and only if
\begin{equation}\label{FF3}
 (4 - 8F)F'  - (F')^2 - 4 F (F')^2 + 4 t (F')^2 = 3.
\end{equation}
In particular, the validity or otherwise of (\ref{FF3}) is unchanged by adding  it
to (\ref{FF2}). This leaves the equation obtained by multiplying
${2 \over 3}F'$ times (\ref{FF}), and is thus valid by our
assumption that (\ref{FF}) is valid.

\section*{Acknowledgements}
The work of DD was supported by grants from the City University of Hong Kong (Project No. 7005252), and grants from the Research Grants Council of the Hong Kong Special Administrative Region, China (Project No. CityU 11303016, CityU 11300520).
The work of PJF was supported by the Australian Research Council (ARC) through the ARC Centre of Excellence for Mathematical and Statistical frontiers (ACEMS) and the ARC grants DP170102028 and DP210102887, and further acknowledges the support of DD in hosting a visit to City University during September 2019 when this work was completed. The work of SXX was supported by National Natural Science Foundation of China under grant numbers 11971492, 11571376 and 11201493.

\end{document}